\newcolumntype{J}{>{\raggedleft\arraybackslash}X}
\newcolumntype{B}{>{\raggedright\arraybackslash}X}
\newcolumntype{.}{D{.}{.}{-1}}
\setlist[itemize]{leftmargin=3\parindent}
\setlist[enumerate]{leftmargin=2\parindent}
\theoremstyle{plain}
\newtheorem{corollary}{Corollary}
\newtheorem{definition}{Definition}[section]
\newtheorem{example}{Example}[section]
\newtheorem{lemma}{Lemma}[section]
\newtheorem{proposition}{Proposition}[section]
\newtheorem{remark}{Remark}
\newtheorem{theorem}{Theorem}[section]
\renewcommand*\env@matrix[1][c]{\hskip -\arraycolsep
  \let\@ifnextchar\new@ifnextchar
  \array{*\c@MaxMatrixCols #1}}
\def\keywords{\vspace{.5em} 
{\textit{Keywords}:\,\relax%
}}
\author{L\'aszl\'o Csat\'o\thanks{e-mail: laszlo.csato@uni-corvinus.hu} }
\affil{Department of Operations Research and Actuarial Sciences \\ Corvinus University of Budapest \\ MTA-BCE ''Lend\"ulet'' Strategic Interactions Research Group \\ Budapest, Hungary}
\title{A graph interpretation of the least squares ranking method}
\date{\today}
\begin{document}

\maketitle

\begin{abstract}
The paper aims at analyzing the least squares ranking method for generalized tournaments with possible missing  and multiple paired comparisons. The bilateral relationships may reflect the outcomes of a sport competition, product comparisons, or evaluation of political candidates and policies. It is shown that the rating vector can be obtained as a limit point of an iterative process based on the scores in almost all cases. The calculation is interpreted on an undirected graph with loops attached to some nodes, revealing that the procedure takes into account not only the given object's results but also the strength of objects compared with it. We explore the connection between this method and another procedure defined for ranking the nodes in a digraph, the positional power measure. The decomposition of the least squares solution offers a number of ways to modify the method.

\keywords{Preference aggregation, Paired comparison, Ranking, Least squares method, Laplacian matrix}
\end{abstract}

\section{Introduction}

Ranking of alternatives is becoming an important tool for individuals, enterprises and nonprofit organizations to help their decision making processes. 
In various contexts the necessary information is available as outcomes of paired comparisons of the objects. Problems of this kind appear in social choice theory, statistics \citep{EltetoKoves1964, Szulc1964}, sports \citep{Landau1895, Landau1914, Zermelo1928} or psychology \citep{Thurstone1927}, to name a few.

There exist two fundamentally different approaches in ranking methodology. The first one seeks various scoring functions, giving a weight or valuation to all alternatives, that is, they compress the paired comparison matrix into a single rating vector. The second approach is based on the approximation of the (generalized) tournament by linear orders \citep{Kemeny1959, Slater1961}, which usually leads to interesting combinatorial and algorithmic problems \citep{Hudry2009}. From a theoretical viewpoint these methods have two great disadvantages: the possible occurrence of multiple optimal solutions and the difficulties arising in the examination of their (normative) properties \citep{Bouyssou2004}. Consequently, we will follow the former approach.

Score seems to be the most obvious rating method: it is obtained by adding the number of victories for each object. It is an appropriate choice in the case of complete tournaments such that all objects have set against each other at the same number of occasions. However, there are many situations where it is unfeasible to get direct information about each pair of alternatives. It implies that the schedule becomes important since an object compared with weak opponents may score more victories than its peers facing stronger objects. In this case, for example for Swiss system tournaments, the application of scores in order to rank the objects can be questioned \citep{Csato2013a}.

In order to take into account the quality of opponents, a large number of scoring procedures have been suggested, see, for instance, \citet{ChebotarevShamis1998a} for a survey of them.
\citet{ChebotarevShamis1999} introduces two classes, the \emph{win-loss combining} and \emph{win-loss unifying} procedures, to categorize the methods proposed in the literature.
Win-loss combining procedures can usually be calculated iteratively on a graph, where the vertices represent the alternatives and the edges reflect the paired comparisons. Among them, PageRank is one of the most popular \citep{BrinPage1998}. \citet{SlikkerBormvandenBrink2012} integrated its core, the invariant method \citep{Daniels1969, MoonPullman1970} as well as fair bets \citep{Daniels1969, MoonPullman1970} and $\lambda$ \citep{BormvandenBrinkSlikker2002} methods into a single framework, and interpreted them on graphs. \citet{HeringsvanderLaanTalman2005} defines another method, the positional power function for ranking the nodes of directed graphs.

For the other class of win-loss unifying procedures, graph interpretation becomes more difficult since they treat wins and losses uniformly, therefore the direction of edges do not count. We know only one result in this field, the iterative calculation of a subclass of the generalized row sum \citep{Chebotarev1989, Chebotarev1994}, a parametric family of ranking methods \citep{Shamis1994}. This paper gives a graph interpretation for the least squares method \citep{Horst1932, Morrissey1955, Mosteller1951, Gulliksen1956, KaiserSerlin1978} through the use of scores and the comparison structure.

The iterative calculation is similar to the positional power \citep{HeringsvanderLaanTalman2005}, which is somewhat surprising, since the least squares method is defined as an optimization problem, not as an intuition-based proposal.
However, there are two main differences besides the applied approach (win-loss unifying vs win-loss combining):
the role of initial scores and iterated ratings in the calculation, and the choice of the parameter reflecting the importance of successors in digraphs or objects compared with the given one.

The graph interpretation also provides a lot of possibilities to modify it, in order to eliminate its drawbacks from an axiomatic viewpoint \citep{Gonzalez-DiazHendrickxLohmann2013}. We hope that simple calculation and evident connection with the scores can inspire practical applications, as well as offer an alternative to the extensively used PageRank method in certain cases. Nevertheless, it should be taken into account that scientometrics differs from our tournament context since a citation from a journal to another is advantageous for the latter, but not necessarily unfavourable for the former.


The paper is organized as follows. The setting is presented in Section \ref{Sec2}, where two scoring procedures are introduced, too. Section \ref{Sec3} deals with the least squares method, for which a new iterative solution method is given.
It is used for the decomposition of the rating vector leading to the graph interpretation, discussed in Section \ref{Sec4}. Finally, Section \ref{Sec5} concludes our results.

\section{Notations and rating methods} \label{Sec2}

Let $N = \{ X_1,X_2, \dots, X_n \}$, $n \in \mathbb{N}$ be a set of objects  and $\mathbf{R} = \left( R^{(1)}, R^{(2)}, \dots , R^{(m)} \right)$, $m \in \mathbb{N}$ be an array representing the outcomes of paired comparisons between the objects, where $R^{(p)}$ ($p = 1,2, \dots , m$) is an $n \times n$ nonnegative matrix corresponding to the $p$th experiment, round of tournament, questionnary etc. Matrices $R^{(p)}$ may be defined partially, $r_{ij}^{(p)}$ and $r_{ji}^{(p)}$ remain unknown if objects $X_i$ and $X_j$ were not compared in the $p$th round. A fully defined matrix $R^{(p)}$ is called a \emph{complete} paired comparison matrix, while the one with some 'missing' elements is \emph{incomplete}. For all pairs of objects compared $(X_i,X_j)$, $r_{ij}^{(p)} + r_{ji}^{(p)} = 1$ is assumed. 
$r_{ij}^{(p)}$ can be interpreted as the likelihood assigned to the event $X_i$ is better than $X_j$ in the $p$th round of a tournament. Diagonal elements $r_{ii}$ are supposed to be $0$ for all $i = 1,2, \dots ,n$, but they will not be used for the ranking methods discussed.

Most scoring methods are based on the \emph{aggregated paired comparison matrix} $R = (r_{ij})$ containing the sum of the results for all pairs of objects:
\[
r_{ij} = \left\{
\begin{array}{ll}
0 & \textrm{if } r_{ij}^{(p)} \textrm{ is not defined for every } p = 1,2, \dots ,m \\
\sum_{p=1, \, r_{ij}^{(p)} \text{ is defined}}^m r_{ij}^{(p)} & \textrm{otherwise}.
\end{array} \right.
\]
Generally, the outcomes $r_{ij}^{(p)}$ can be aggregated by taking a weighted sum, thus associating different weights on various rounds/experts/areas etc. It makes sense in forecasting sport results, when the latest paired comparisons are considered to be more important.

The pair $(N,\mathbf{R})$ is called a \emph{preference profile}. The set of preference profiles is $\mathcal{R}$. This setting is able to integrate four extra features in addition to those of binary tournaments (complete, weak and asymmetric binary relations, see \citep{Rubinstein1980}):
\begin{itemize}[label=$\bullet$]
\item
the possibility of ties: $r_{ij} = r_{ji}$;
\item
different preference intensities captured by the likelihood $r_{ij} / (r_{ij} + r_{ji})$;
\item
incompleteness as $r_{ij}$ is undefined (unknown) for some pairs of objects $(X_i,X_j)$;
\item
multiple comparisons between objects: $r_{ij}^{(p)}$ is known for more than one $p$.
\end{itemize}
For example, if only strict binary relations are allowed, then $r_{ij}^{(p)} \in \{ 0,1 \}$ for all $p = 1,2, \dots ,m$.
This notation follows \citet{ChebotarevShamis1998a} and \citet{Gonzalez-DiazHendrickxLohmann2013}.

Regarding the four extensions of the original binary tournament model, the possibility of ties is an immediate consequence of different preference intensities, a common feature in many situations.
Multiple comparisons can arise naturally from the definition of preference profile. It can be supposed that incompleteness does not appear in an ideal case; however, we want to allow an expert to avoid judgement if he/she is not familiar with the two alternatives. The lack of direct information about paired comparisons may arise when the problem contains a large number of objects or it is too expensive to compare each pair; the latter is the reason of the emergence of knockout or Swiss system tournaments in some sports. Another case of incomplete comparisons is the need for predicting the final ranking before all rounds of a round-robin tournament played.

A \emph{rating (scoring) method} $f$ is an $\mathcal{R} \to \mathbb{R}^n$ function, $f_i = f(N,\mathbf{R})_i$ is the rating of object $X_i$.
It defines a corresponding \emph{ranking method} $\varphi$, that is, a (transitive and complete) weak order such that the objects are arranged according to $f$: $\varphi$ ranks $X_i$ weakly above $X_j$ if and only if $f_i \geq f_j$. It is denoted by $X_i \succeq_{\varphi} X_j$ or simply by $X_i \succeq X_j$, if it is not misleading. This definition of $\succeq$ already determines that $X_i$ is ranked strictly above $X_j$ if and only if $X_i$ is ranked weakly above $X_j$, but $X_j$ is not ranked weakly above $X_i$: $(X_i \succ X_j) \Leftrightarrow \left[ (X_i \succeq X_j) \text{ and } \neg (X_j \succeq X_i) \right]$. Finally, the ranking can be tied between objects $X_i$ and $X_j$: $X_i \sim X_j \Leftrightarrow \left[ (X_i \succeq X_j) \text{ and } (X_j \succeq X_i) \right]$. Ratings give cardinal, and rankings give ordinal information about the objects. Throughout the paper, the notions of rating and ranking methods will be used analogously since the discussed ranking procedures are based on rating vectors.

A scoring procedure is \emph{neutral} if any reindexing of the objects in $N$ preserves their rating.
A scoring procedure is \emph{anonymous} if any reindexing of the paired comparison matrices $R^{(p)}$ in an array $\mathbf{R}$ preserves the ratings of objects.
It seems to be quite natural to demand neutrality and anonymity; all rating procedures discussed here will satisfy these conditions. Note that a method based on the aggregated paired comparison matrix $R$ is always anonymous. Rating procedures $f_1$ and $f_2$ are called \emph{equivalent} if they result in the same ranking.

Ranking of the objects involves two main challenges. The first one is common in all paired comparison models: the possible appearance of \emph{circular triads}, when object $X_i$ is better than $X_j$ (that is, $r_{ij} > r_{ji}$), $X_j$ is better than $X_k$, but $X_k$ is better than $X_i$. Circular triads generate difficulties in all paired comparison settings, but, if preference intensities also count, other triplets ($X_i, X_j, X_k$) may produce problems. The second issue arises as the consequence of incomplete and multiple comparisons: the performance of objects compared with $X_i$ strongly influences the observable paired comparison outcomes $r_{ij}$. For example, if $X_i$ was compared only with $X_j$, then its rating certainly should depend on the results of $X_j$. We will see that this argument can be continued infinitely. Since both problems can occur only if there is at least three objects, the case $n=2$ becomes trivial.

An alternative representation of paired comparisons is the following.
The \emph{additive paired comparison matrix} $A^{(p)}$ can be derived from $R^{(p)}$ by 'centering' the outcomes of paired comparisons such that $a_{ij}^{(p)} = r_{ij}^{(p)} - r_{ji}^{(p)}$. For undefined comparisons $a_{ij}^{(p)}$ is set to $0$. Now $a_{ji}^{(p)} = -a_{ij}^{(p)}$, $A^{(p)}$ is a skew-symmetric matrix. It is called \emph{consistent} if $a_{ij} = a_{ik} + a_{kj}$ for all triplets $(X_i,X_j,X_k)$, and \emph{inconsistent} if this condition is not satisfied for some $(X_i,X_j,X_k)$. The aggregated additive paired comparison matrix $A = (a_{ij})$ is defined analogously by $A = \sum_{p=1}^m A^{(p)}$, and will be referred to as the \emph{results matrix}.

The numbers of comparisons between the objects determine the \emph{matches matrix} $M = (m_{ij})$:
\[
m_{ij} = \left\{
\begin{array}{ll}
\textrm{the number of indices $1 \leq p \leq m$ such that $r_{ij}^{(p)}$ is defined} & \textrm{if } i \neq j \\
0 & \textrm{if } i=j.
\end{array} \right.
\]
$M$ is a symmetric matrix and $0 \leq m_{ij} \leq m$. It is not restrictive to assume that $m = \max_{i,j} m_{ij}$ if the reduced matrix $A$ is analyzed. In most practical applications (and in our setting above) $m_{ij} \in \mathbb{N}$, but the whole discussion is valid for $m_{ij} \in \mathbb{R}_+$ as well, this domain choice has no impact on the results. The generalization has some significance for example in the above mentioned problem of forecasting sport results. Here the latest comparisons contain more information about the current form of the player, which may be addressed by exponential smoothing, a technique usually applied to time series data.

The matches matrix $M$ is called \emph{block diagonal} and \emph{block anti-diagonal}, respectively, if it has a partition $N_1 \cup N_2 = N$, $|N_1| = n_1$ and $|N_2| = n_2$ such that with a possible reordering of the objects
\[
M = \left(
\begin{array}{cc}
M^1_{n_1 \times n_1} & \mathbf{0}_{n_1 \times n_2} \\
\mathbf{0}_{n_2 \times n_1} & M^2_{n_2 \times n_2}
\end{array} \right)
\textrm{ and }
M = \left(
\begin{array}{cc}
\mathbf{0}_{n_1 \times n_1} & M^1_{n_1 \times n_2} \\
M^2_{n_2 \times n_1} & \mathbf{0}_{n_2 \times n_2}
\end{array} \right),
\]
respectively, where the subscripts denote the dimensions of (sub)matrices. Furthermore, $d_i = \sum_{j=1}^n m_{ij}$ is the \emph{total number of comparisons} of object $X_i$.

Results matrix $A$ and matches matrix $M$ together with the set of objects $N$ determine a \emph{ranking problem} $(N,A,M)$ or $(A,M)$ for short.
In this modified setting, $(a_{ij} + m_{ij})/(2m_{ij}) \in \left[ 0,1 \right]$ may be regarded as the likelihood that object $X_i$ defeats $X_j$.

A ranking problem is called \emph{round-robin} if $m_{ij} = 1$ for all $i \neq j$, that is, every object has been compared with all the others exactly once and $d_i = n-1$ for all $i = 1,2, \dots ,n$. A round-robin ranking problem is more general than the binary tournaments of \citet{Rubinstein1980} as it allows for ties ($a_{ij} = a_{ji} = 0$) and preference intensities ($a_{ij}$ is not necessarily $-1$ or $1$).
A ranking problem is called \emph{unweighted} if $m_{ij} \in \{ 0,1 \}$ for all $i \neq j$, namely, every paired comparison is carried out at most once. Otherwise the ranking problem is called \emph{weighted}.

Matrix $M$ can be represented by an undirected multigraph $G := (V,E)$ where vertex set $V$ corresponds to the object set $N$, and the number of edges between objects $X_i$ and $X_j$ is equal to $m_{ij}$. Therefore the set of edges represents the structure of known paired comparisons. The number of edges adjacent to $X_i \in N$ is the \emph{degree} $d_i$ of node $X_i$. A \emph{path} is a sequence of objects $X_{k_1}, X_{k_2}, \dots , X_{k_t}$ such that $m_{k_\ell k_{\ell+1}} > 0$ for all $\ell = 1,2, \dots ,t-1$. Two vertices are \emph{connected} if $G$ contains a path between them. A graph is said to be connected if every pair of vertices is connected. The \emph{adjacency matrix} $T^A$ of $G$ is given with the elements $t_{ij} = 1$ if $m_{ij} > 0$ and $t_{ij} = 0$ otherwise.

Graph $G$ is called the \emph{comparison multigraph} associated with the ranking problem $(N,A,M)$, however, it is independent of the results of paired comparisons. The \emph{Laplacian matrix} $L = \left[ \ell_{ij} \right], \, i,j = 1,2, \dots ,n$ of graph $G$ is an $n \times n$ real matrix with $\ell_{ij} = -m_{ij}$ for all $i \neq j$ and $\ell_{ii} = d_i$ for all $i = 1,2, \dots ,n$. $L$ has real and nonnegative eigenvalues (it is positive semidefinite) \citep[Theorem 2.1]{Mohar1991}, denoted by $\mu_1 \geq \mu_2 \geq \dots \geq \mu_{n-1} \geq \mu_{n} = 0$. Let $\mathbf{e} \in \mathbb{R}^n$ be the unit column vector, that is, $e_i = 1$ for all $i = 1,2, \dots ,n$.

Now we define two rating methods for the ranking problem $(N,A,M)$.

\begin{definition}
\textbf{Row sum} rating method: $\mathbf{s} = \sum_{p=1}^m \mathbf{s}^{(p)} = \sum_{p=1}^m A^{(p)} \mathbf{e} = A \mathbf{e}$.
\end{definition}

Row sum will also be referred to as \emph{scores}, $\mathbf{s}$ is sometimes called the \emph{scores vector}.
The following parametric rating procedure was constructed axiomatically by \citet{Chebotarev1989} and thoroughly analyzed in \citet{Chebotarev1994}.

\begin{definition} \label{Def23}
\textbf{Generalized row sum} rating method: it is the unique solution $\mathbf{x}(\varepsilon)$ of the system of linear equations $(I+ \varepsilon L) \mathbf{x}(\varepsilon) = (1 + \varepsilon m n)\mathbf{s}$, where $\varepsilon \geq 0$ is a parameter, $\mathbf{s}$ is the scores vector, $I$ is the $n \times n$ identity matrix, and $L$ is the Laplacian matrix of the comparison multigraph $G$.
\end{definition}

It follows from the definition that this procedure results in the row sum ranking if $\varepsilon = 0$. For larger parameter values it adjusts the standard scores of objects by accounting for the performance of objects compared with it, and so on. $\varepsilon$ indicates the importance attributed to this correction of scores $\mathbf{s}$.

Both the score and the generalized row sum ratings are well-defined and easily computable from a system of linear equations for all ranking problems $(A,M)$.

\section{The least squares method and its solution} \label{Sec3}

Another approach to ranking is the statistical estimation by identifying $h_{ij} = 2a_{ij}/m_{ij}$ as the realized difference between the latent valuations of objects $X_i$ and $X_j$. In the ideal case no randomness is present and there exists a rating vector $q \in \mathbb{R}^n$ such that $h_{ij} = q_i - q_j$ for all pairs of objects $(X_i,X_j)$. It requires the consistency of the results matrix $A$ since $0 = (q_i - q_j) + (q_j - q_k) + (q_k - q_i) = h_{ij} + h_{jk} + h_{ki}$ for all $(X_i,X_j,X_k)$. If it is inconsistent, the actual outcome $h_{ij}$ may differ from its 'expected value' $q_i - q_j$, and it makes sense to apply the following least squared error minimization
\[
\min_{\mathbf{q} \in \mathbb{R}^n} \sum_{X_i,X_j \in N} m_{ij} (h_{ij} - q_i + q_j)^2.
\]

This method was discussed by \citet{Horst1932} and \citet{Mosteller1951} in the round-robin case, and was extended to unweighted problems by \citet{Gulliksen1956} and \citet{KaiserSerlin1978}. The weighted case is examined in \citet{BozokiCsatoRonyaiTapolcai2014} and \citet{Gonzalez-DiazHendrickxLohmann2013}, but it can also be regarded as unweighted by summation over indices $i,j,p$ \citep{ChebotarevShamis1999}. Clearly, the problem has an infinite number of solutions because the value of the objective function is the same for $\mathbf{q}$ and $\mathbf{q} + \beta \mathbf{e}, \beta \in \mathbb{R}$. A natural normalization is $\mathbf{e}^\top \mathbf{q} = 0$. The generalized row sum can be considered as a Bayesian modification of the least squares estimation \citep{Chebotarev1994}.

The first-order conditions of optimality give the following system of equations with unconstrained variables $q_i \in \mathbb{R}$ for all $i = 1,2, \dots ,n$:
\[
    \left(
    \begin{array}{cccccc}
    d_1       &   -m_{12}       &   -m_{13}  & \dots &   -m_{1,n-1} & -m_{1,n} \\
    -m_{12}      &   d_2       &   -m_{23} & \dots  &   -m_{2,n-1}  & -m_{2,n}    \\
    -m_{31} &   -m_{23} &   d_3       & \dots &  -m_{3,n-1}  & -m_{3,n} \\
    \vdots &   \vdots       & \vdots &  \ddots &   \vdots & \vdots      \\
    -m_{n-1,1} &   -m_{n-1,2}       &   -m_{n-1,3} & \dots &   d_{n-1} &  -m_{n-1,n}    \\
    -m_{n,1} &   -m_{n,2}       &  -m_{n,3} & \dots &   -m_{n,n-1} & d_{n}       \\
    \end{array}
    \right)
    \left(
    \begin{array}{c}
    q_1 \\
    q_2 \\
    q_3 \\
    \vdots \\
    q_{n-1} \\
    q_{n} \\
    \end{array}
    \right)
   =
    \left(
    \begin{array}{c}
    s_1 \\
    s_2 \\
    s_3 \\
    \vdots \\
    s_{n-1} \\
    s_{n} \\
    \end{array}
    \right),
\]
where $d_i =  \sum_{j=1}^n m_{ij}$ denotes the total number of $X_i$'s comparisons, and the element in the $(i,j)$ position ($i \neq j$) of the coefficient matrix equals $-m_{ij}$. On the right-hand side, $s_i = \sum_{j=1}^n a_{ij}$ is the score of object $X_i$. Due to the convexity of the objective function, the system of linear equations is a sufficient condition for optimality.

Note that the $n \times n$ matrix on the left-hand side is exactly the Laplacian matrix associated with the comparison multigraph, thus the first-order conditions give $L \mathbf{q} = \mathbf{s}$. $L$ has no inverse as sum of its rows (and columns) is zero.

\begin{definition}
\textbf{Least squares} rating method: it is the solution $\mathbf{q}$ of the system of linear equations $L \mathbf{q} = \mathbf{s}$ and $\mathbf{e}^\top \mathbf{q} = 0$.
\end{definition}

\begin{corollary} \label{Col1}
The least squares rating can be obtained as a limit of the generalized row sum method if $\varepsilon \to \infty$.
\end{corollary}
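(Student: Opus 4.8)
The plan is to solve the generalized row sum system explicitly via the spectral decomposition of the symmetric positive semidefinite Laplacian $L$ and then pass to the limit termwise. Write $L = \sum_{k=1}^{n} \mu_k \mathbf{v}_k \mathbf{v}_k^\top$ with orthonormal eigenvectors $\mathbf{v}_k$ and eigenvalues $\mu_1 \ge \dots \ge \mu_{n-1} > \mu_n = 0$, where $\mathbf{v}_n = \mathbf{e}/\sqrt{n}$ spans the kernel (taking $G$ connected, so $\mu_{n-1} > 0$). Since $1 + \varepsilon \mu_k \ge 1$ for every $k$, the matrix $I + \varepsilon L$ is positive definite, hence invertible for all $\varepsilon \ge 0$, which re-confirms that $\mathbf{x}(\varepsilon)$ is uniquely determined, and $(I + \varepsilon L)^{-1} = \sum_k (1 + \varepsilon \mu_k)^{-1} \mathbf{v}_k \mathbf{v}_k^\top$.

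First I would record the orthogonality $\mathbf{e}^\top \mathbf{s} = \mathbf{e}^\top A \mathbf{e} = 0$, which holds because $A$ is skew-symmetric. Consequently $\mathbf{v}_n^\top \mathbf{s} = 0$, so in the eigenbasis expansion the kernel direction drops out and
\[
\mathbf{x}(\varepsilon) = (1 + \varepsilon m n)(I + \varepsilon L)^{-1}\mathbf{s} = (1 + \varepsilon m n)\sum_{k=1}^{n-1}\frac{\mathbf{v}_k^\top \mathbf{s}}{1 + \varepsilon \mu_k}\,\mathbf{v}_k.
\]
This is the crucial point: were $\mathbf{s}$ to have a nonzero component along $\mathbf{e}$, the factor $(1 + \varepsilon m n)$, which is \emph{not} damped in the kernel direction, would force $\mathbf{x}(\varepsilon)$ to diverge. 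Because $\mathbf{s} \perp \mathbf{e}$, only strictly positive eigenvalues appear in the denominators, every summand stays bounded, and the limit exists.

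Letting $\varepsilon \to \infty$, each scalar coefficient tends to a finite value, $(1 + \varepsilon m n)/(1 + \varepsilon \mu_k) \to m n / \mu_k$ for $k \le n-1$, since numerator and denominator both grow linearly in $\varepsilon$. Hence
\[
\lim_{\varepsilon \to \infty}\mathbf{x}(\varepsilon) = m n \sum_{k=1}^{n-1}\frac{\mathbf{v}_k^\top \mathbf{s}}{\mu_k}\,\mathbf{v}_k = m n \, L^{+}\mathbf{s},
\]
where $L^{+}$ is the Moore--Penrose pseudoinverse. To finish I would identify $L^{+}\mathbf{s}$ with the least squares rating $\mathbf{q}$: the vector $\mathbf{z} := L^{+}\mathbf{s}$ satisfies $L\mathbf{z} = \mathbf{s}$ (as $\mathbf{s}$ lies in the range of $L$ by the orthogonality above) and $\mathbf{e}^\top \mathbf{z} = 0$ (as $\mathbf{z}$ is orthogonal to $\ker L = \mathrm{span}\,\mathbf{e}$), which are exactly the two defining conditions of $\mathbf{q}$. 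Thus $\lim_{\varepsilon \to \infty}\mathbf{x}(\varepsilon) = m n \, \mathbf{q}$, a fixed positive multiple of $\mathbf{q}$; since rescaling by the constant $m n > 0$ leaves the induced ranking unchanged, the generalized row sum converges to a method equivalent to the least squares rating.

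I expect the main obstacle to be precisely the finiteness of the limit: the right-hand side of the defining system is amplified by the factor $(1 + \varepsilon m n) \to \infty$, so one must verify that this growth is exactly compensated by the decay of $(I + \varepsilon L)^{-1}$ in every relevant direction. The spectral resolution makes this transparent, reducing the matter to the two facts $\mathbf{e}^\top \mathbf{s} = 0$ and $\mu_k > 0$ for $k \le n-1$. If $G$ is disconnected one replaces the use of $\mu_{n-1} > 0$ by summation over all positive eigenvalues; the identity $\mathbf{e}^\top \mathbf{s} = 0$ then refines to $\sum_{i \in C} s_i = 0$ on each component $C$, so $\mathbf{s}$ still lies in the range of $L$ and the argument is unchanged, with $\mathbf{q}$ read as the minimum-norm solution.
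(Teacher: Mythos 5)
Your argument is correct. The paper itself does not prove Corollary~\ref{Col1}; it simply defers to \citet[p. 326]{ChebotarevShamis1998a}. You instead give a self-contained spectral proof: diagonalize $L = \sum_k \mu_k \mathbf{v}_k\mathbf{v}_k^\top$, observe $\mathbf{e}^\top\mathbf{s} = \mathbf{e}^\top A\mathbf{e} = 0$ by skew-symmetry so that the kernel direction carries no component of $\mathbf{s}$, and check that the unbounded factor $(1+\varepsilon mn)$ is exactly cancelled by the decay $(1+\varepsilon\mu_k)^{-1}$ on each nonzero eigenspace, giving $\lim_{\varepsilon\to\infty}\mathbf{x}(\varepsilon) = mn\,L^+\mathbf{s} = mn\,\mathbf{q}$. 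You correctly flag that the limit is a positive scalar multiple of $\mathbf{q}$ rather than $\mathbf{q}$ itself, which is all the corollary needs since the induced ranking is unchanged (the paper's notion of equivalent rating procedures). Your final identification of $L^+\mathbf{s}$ with $\mathbf{q}$ duplicates what the paper later establishes as Theorem~\ref{Theo31} via Lemma~\ref{Lemma32}; had the corollary been placed after that theorem you could simply cite it, but proving it directly from the spectral resolution is equally valid and arguably cleaner, since the same eigenbasis handles both the invertibility of $I+\varepsilon L$ and the limit. Your closing remark on the disconnected case (component-wise vanishing of the scores puts $\mathbf{s}$ in the range of $L$, with $\mathbf{q}$ read as the minimum-norm solution) is a correct extension beyond what the paper claims, which restricts attention to connected comparison multigraphs where $\mathbf{q}$ is unique.
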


\begin{proof}
See \citet[p. 326]{ChebotarevShamis1998a}. 
\end{proof}

\begin{figure}
\centering
\caption{The preference graph of Example \ref{Examp31}}
\label{Fig1}
\vspace{0.5cm}
\begin{tikzpicture}[scale=1,auto=center, transform shape, >=triangle 45]
\tikzstyle{every node}=[draw,shape=rectangle];
  \node (n1) at (1,10) {$X_1$};
  \node (n3) at (1,8)  {$X_3$};
  \node (n5) at (1,6)  {$X_5$};
  \node (n2) at (7,10) {$X_2$};
  \node (n4) at (7,8)  {$X_4$};
  \node (n6) at (7,6)  {$X_6$};
  \node (n7) at (4,8)  {$X_7$};

  \foreach \from/\to in {n1/n3,n3/n5,n5/n6,n5/n7,n2/n4,n4/n6,n7/n6}
    \draw [->] (\from) -- (\to);
\end{tikzpicture}
\end{figure}
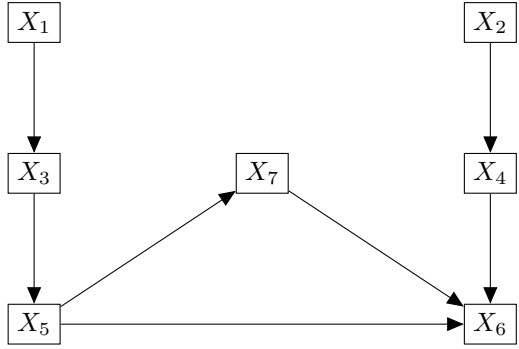

\begin{example} \label{Examp31}
\citep[Example 1]{Chebotarev1994} Suppose that the graph on Figure \ref{Fig1} is a \emph{preference graph}, reflecting the dominance relation between the objects: $a_{ij} = m_{ij} = 1$ if and only if there is an edge from $X_i$ to $X_j$. 

The corresponding results, matches matrices and the scores vector are as follows
\[
\setlength{\arraycolsep}{5pt}
A =
\begin{pmatrix}
    0     & 0     & 1     & 0     & 0     & 0     & 0 \\
    0     & 0     & 0     & 1     & 0     & 0     & 0 \\
    -1    & 0     & 0     & 0     & 1     & 0     & 0 \\
    0     & -1    & 0     & 0     & 0     & 1     & 0 \\
    0     & 0     & -1    & 0     & 0     & 1     & 1 \\
    0     & 0     & 0     & -1    & -1    & 0     & -1 \\
    0     & 0     & 0     & 0     & -1    & 1     & 0 \\
\end{pmatrix},
\quad
M =
\begin{pmatrix}
    0     & 0     & 1     & 0     & 0     & 0     & 0 \\
    0     & 0     & 0     & 1     & 0     & 0     & 0 \\
    1     & 0     & 0     & 0     & 1     & 0     & 0 \\
    0     & 1     & 0     & 0     & 0     & 1     & 0 \\
    0     & 0     & 1     & 0     & 0     & 1     & 1 \\
    0     & 0     & 0     & 1     & 1     & 0     & 1 \\
    0     & 0     & 0     & 0     & 1     & 1     & 0 \\
\end{pmatrix}
\text{ and }
\mathbf{s} =
\begin{pmatrix}[r]
    1 \\
    1 \\
    0 \\
    0 \\
    1 \\
    -3 \\
    0 \\
\end{pmatrix}.
\]
The solution for the least squares method is
\[
\setlength{\arraycolsep}{4pt}
\mathbf{q} =
\left[
\begin{array}{ccccccc}
    1.810 & 0.476 & 0.810 & -0.524 & -0.190 & -1.524 & -0.857 \\
\end{array}
\right]^\top .
\]
The scores method ($\mathbf{s}$) does not show the strength of objects compared with the given one. However, it is strange to assume that $X_1$ and $X_5$ have performed equally because the former has beaten the latter indirectly through $X_3$.
Least squares method results in the ordering $X_1 \succ X_3 \succ X_2 \succ X_5 \succ X_4 \succ X_7 \succ X_6$.
\end{example}

\citet[p. 127]{Gulliksen1956} notes for the unweighted case that, in general, the first minor of $L$ has an inverse, which makes it possible to normalize the rating vector by $q_n = 0$ and eliminate the last equation. After that, the upper-left $(n-1) \times (n-1)$ submatrix of $L$ denoted by $L_{-1}$ is taken with the corresponding first $n-1$ components of $\mathbf{s}$ and $\mathbf{q}$, denoted by $\mathbf{s}_{-1}$ and $\mathbf{q}_{-1}$, respectively. If $\left[ L_{-1} \right]^{-1}$ exists, then $\mathbf{q}_{-1} = \left[ L_{-1} \right]^{-1} \mathbf{s}_{-1}$. It means that the first $n-1$ equations of the system $L \mathbf{q} = \mathbf{s}$ remain to be satisfied if $q_n = 0$ is attached to $\mathbf{q}_{-1}$. The last equation is true because the sum of the first $n-1$ rows of $L$ is the opposite of the last, and similarly, the sum of the elements of $\mathbf{s}_{-1}$ is equal to $-s_n$.

In a round-robin ranking problem $\left[ L_{-1} \right]^{-1}$ can be computed explicitly, it is an $(n-1) \times (n-1)$ matrix with $2/n$ in each diagonal and $1/n$ in each off-diagonal entry \citep[p. 127]{Gulliksen1956}. The unique solution is $q_i = \sum_{j=1}^n a_{ij}/n = s_i/n$, implying that the row sum and least squares rankings coincide. This property is called \emph{score consistency} ($SCC$) by \citet{Gonzalez-DiazHendrickxLohmann2013}.

Some connections of the ranking problem and the associated Laplacian matrix $L$ are worth mentioning here.

\begin{lemma} \label{Lemma31}
For a ranking problem $(N,A,M)$, the following statements are equivalent:
	\begin{enumerate}
	\item \label{point1}
	Matches matrix $M$ is not block diagonal;
	\item \label{point2}
	Comparison multigraph $G$ is connected;
	\item \label{point3}
	The second smallest eigenvalue $\mu_{n-1}$ of $L$ is positive.	
	\end{enumerate}
\end{lemma}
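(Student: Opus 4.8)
The plan is to prove the chain by establishing the two equivalences (\ref{point1}) $\Leftrightarrow$ (\ref{point2}) and (\ref{point2}) $\Leftrightarrow$ (\ref{point3}) separately; the first is essentially unwinding the definitions, while the second is the spectral characterization of connectivity. For (\ref{point1}) $\Leftrightarrow$ (\ref{point2}) I would argue by contraposition, showing that $M$ is block diagonal if and only if $G$ is disconnected. By definition $M$ is block diagonal exactly when there is a partition $N_1 \cup N_2 = N$ with both parts nonempty and $m_{ij} = 0$ whenever $X_i \in N_1$, $X_j \in N_2$. This says precisely that no edge of $G$ joins $N_1$ to $N_2$, so no path can cross between the blocks and $G$ is disconnected; conversely, putting one connected component into $N_1$ and the rest into $N_2$ exhibits $M$ as block diagonal. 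Negating both sides gives (\ref{point1}) $\Leftrightarrow$ (\ref{point2}).

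For (\ref{point2}) $\Leftrightarrow$ (\ref{point3}) the central tool is the Laplacian quadratic form identity
\[
\mathbf{x}^\top L \mathbf{x} = \sum_{i<j} m_{ij} (x_i - x_j)^2 ,
\]
which follows directly from $\ell_{ii} = d_i = \sum_j m_{ij}$ and $\ell_{ij} = -m_{ij}$. Since $L$ is positive semidefinite with $L\mathbf{e} = \mathbf{0}$, its smallest eigenvalue is $\mu_n = 0$, and therefore $\mu_{n-1} > 0$ holds if and only if $0$ is a simple eigenvalue, i.e. $\dim \ker L = 1$. As $L$ is real symmetric, the multiplicity of the eigenvalue $0$ equals $\dim \ker L$, so it suffices to determine the kernel under each connectivity hypothesis.

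If $G$ is connected and $\mathbf{x} \in \ker L$, then $\mathbf{x}^\top L \mathbf{x} = 0$, and since every summand of the identity above is nonnegative we get $x_i = x_j$ for every pair with $m_{ij} > 0$, that is, across every edge. Following a path between any two vertices then propagates this equality throughout $G$, forcing $\mathbf{x}$ to be constant, so $\ker L = \mathrm{span}\{\mathbf{e}\}$ and $\mu_{n-1} > 0$. Conversely, if $G$ is disconnected I would exhibit a two-dimensional kernel: letting $\mathbf{v}$ be the indicator vector of one component's vertex set $N_1$ (with $v_i = 1$ for $X_i \in N_1$ and $v_i = 0$ otherwise), one checks $(L\mathbf{v})_i = d_i - \sum_{j \in N_1} m_{ij} = \sum_{j \in N_2} m_{ij} = 0$ for $X_i \in N_1$, while for $X_i \in N_2$ every nonzero entry of that row multiplies a zero coordinate. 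Thus $\mathbf{v}$ and $\mathbf{e} - \mathbf{v}$ are two linearly independent kernel vectors, so $\dim \ker L \geq 2$ and $\mu_{n-1} = 0$.

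The main obstacle is the implication that connectivity yields a one-dimensional kernel: the delicate point is passing from ``equality of coordinates across each single edge'' to ``equality across the whole graph,'' which is exactly where connectivity (the existence of a path between every pair of vertices) enters. Positive semidefiniteness of $L$ is what licenses the crucial step of converting $\mathbf{x}^\top L \mathbf{x} = 0$ into the termwise conclusion that each $m_{ij}(x_i - x_j)^2$ vanishes; everything else is bookkeeping with the definitions.
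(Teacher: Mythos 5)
Your proposal is correct. For the equivalence of (\ref{point1}) and (\ref{point2}) you do exactly what the paper does: unwind the definition of block diagonality as the absence of edges between the two blocks. The only real difference lies in (\ref{point2}) $\Leftrightarrow$ (\ref{point3}): the paper simply cites \citet[Theorem 2.1]{Mohar1991} for the fact that the multiplicity of the zero Laplacian eigenvalue equals the number of connected components, whereas you prove that fact from scratch via the quadratic form $\mathbf{x}^\top L \mathbf{x} = \sum_{i<j} m_{ij}(x_i - x_j)^2$, propagating equality of coordinates along paths in one direction and exhibiting component indicator vectors in the kernel in the other. Your version is self-contained and makes visible exactly where connectivity and positive semidefiniteness enter (and it covers the weighted multigraph setting with $m_{ij} \in \mathbb{R}_+$ without any appeal to the unweighted literature), at the cost of a page of standard spectral graph theory that the paper is content to delegate to a reference. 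Both arguments establish the same underlying fact; there is no gap in yours.
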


\begin{proof}
The equivalence of items \ref{point1} and \ref{point2} is straightforward, since if $M$ is block diagonal, then there is no edge between the set of objects $N_1$ and $N_2$ and vice versa. \ref{point2} $\Leftrightarrow$ \ref{point3} is proved in \citep[Theorem 2.1]{Mohar1991}: the multiplicity of the Laplacian eigenvalue $\mu_n = 0$ is equal to the number of components of graph $G$. 
\end{proof}

From the three conditions above connectedness of the comparison multigraph will be used in our discussion. If some properties are required of graph $G$, it means that only the appropriate subset of ranking problems $(N,A,M)$ is considered.

A graph $G$ is called \emph{bipartite} if its node set $N$ can be divided into two disjoint subsets $U$ and $V$ such that every edge connects a vertex in $U$ to one in $V$. Equivalently, a bipartite graph is a graph without odd-length circles.
Notice that a similar lemma can be stated for the other special structure of the matches matrix: it is block anti-diagonal if and only if the comparison multigraph is bipartite. The equivalence is due to the fact that the objects can be divided into two groups without comparisons inside the groups.

Intuitively, uniqueness of the least squares solution should be provided when all objects $X_i$ and $X_j$ can be compared directly or indirectly, that is, there exists a chain $X_i = X_{k_0}, X_{k_1}, \dots , X_{k_t} = X_j$ such that for each $\ell \in \{ 0,1, \dots , t-1 \}, X_{k_{\ell}}$ has been compared with $X_{k_{\ell+1}}$.

\begin{proposition} \label{Prop31}
The least squares rating $\mathbf{q}$ is unique if and only if comparison multigraph $G$ is connected.
\end{proposition}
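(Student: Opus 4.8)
The plan is to recast everything in terms of the kernel of the Laplacian $L$ and to exploit the fact that $L$ is symmetric and positive semidefinite. First I would record the two structural facts that hold for every ranking problem: the constant vector satisfies $L\mathbf{e}=\mathbf{0}$, so $\mathbf{e}\in\ker L$ always; and, since $L$ is symmetric, $\operatorname{range} L = (\ker L)^\perp$. By Lemma \ref{Lemma31} (together with the multiplicity statement of \citet[Theorem 2.1]{Mohar1991} invoked in its proof), connectedness of $G$ is equivalent to $\dim \ker L = 1$, while a disconnected $G$ with $k\ge 2$ components has $\dim\ker L = k$, with $\ker L$ spanned by the indicator vectors $\mathbf{1}_{C_1},\dots,\mathbf{1}_{C_k}$ of the components. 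The uniqueness question then becomes: after intersecting the affine solution set of $L\mathbf{q}=\mathbf{s}$ with the hyperplane $\mathbf{e}^\top\mathbf{q}=0$, how many points remain?

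Next I would settle existence, which is needed in both directions. Because $A$ is skew-symmetric and $a_{ij}=0$ whenever $X_i$ and $X_j$ lie in different components (no comparison, hence no edge), one obtains $\mathbf{1}_{C}^\top\mathbf{s}=\mathbf{1}_C^\top A\mathbf{e}=\sum_{i,j\in C}a_{ij}=0$ for every component $C$; in particular $\mathbf{e}^\top\mathbf{s}=0$. Thus $\mathbf{s}$ is orthogonal to $\ker L$, i.e. $\mathbf{s}\in(\ker L)^\perp=\operatorname{range}L$, so the system $L\mathbf{q}=\mathbf{s}$ is always solvable. For the forward (connected) direction, $\dim\ker L=1$ forces the full solution set to be the line $\{\mathbf{q}_0+c\mathbf{e}:c\in\mathbb{R}\}$ for any particular solution $\mathbf{q}_0$; imposing $\mathbf{e}^\top\mathbf{q}=0$ yields $\mathbf{e}^\top\mathbf{q}_0+cn=0$, which determines $c$ uniquely, so $\mathbf{q}$ is unique.

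For the converse I would argue the contrapositive: if $G$ is disconnected then $\dim\ker L=k\ge 2$, and the solution set of $L\mathbf{q}=\mathbf{s}$ is an affine space of dimension $k$. The normalization $\mathbf{e}^\top\mathbf{q}=0$ is a single linear condition, so it can reduce the dimension by at most one, leaving a solution set of dimension at least $k-1\ge 1$; hence infinitely many vectors satisfy both requirements and the rating is not unique. The point that must be handled carefully — and which I expect to be the main obstacle — is verifying that the normalization really removes exactly one dimension and does not, by some coincidence, pin the vector down. This is precisely where $\mathbf{e}\in\ker L$ is used: since $\mathbf{e}$ itself lies in the kernel and $\mathbf{e}^\top\mathbf{e}=n\neq 0$, the functional $\mathbf{q}\mapsto\mathbf{e}^\top\mathbf{q}$ is non-constant along the solution set, so the constraint fixes only the $\mathbf{e}$-component while leaving the remaining $k-1$ kernel directions (for instance the component-differences $\mathbf{1}_{C_i}-\mathbf{1}_{C_j}$) entirely free. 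Concretely, one exhibits two distinct normalized solutions by adding a suitable multiple of such a component-difference to a given solution, which completes the argument.
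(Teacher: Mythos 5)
Your argument is correct, but it is genuinely different from what the paper does: the paper's ``proof'' of Proposition \ref{Prop31} is only a pointer to the literature (\citet[Theorem 4]{BozokiFulopRonyai2010} and \citet{KaiserSerlin1978} for the unweighted case, \citet{BozokiCsatoRonyaiTapolcai2014} and \citet{Gonzalez-DiazHendrickxLohmann2013} for the weighted case), with no self-contained reasoning. You instead give a direct kernel--range argument: $\ker L$ has dimension equal to the number of components (the same Mohar multiplicity theorem already invoked in Lemma \ref{Lemma31}), $\mathbf{s}=A\mathbf{e}$ is orthogonal to every component indicator by skew-symmetry of $A$, hence $L\mathbf{q}=\mathbf{s}$ is always solvable, and the single normalization $\mathbf{e}^\top\mathbf{q}=0$ kills exactly the $\mathbf{e}$-direction of the solution set, leaving dimension $k-1$. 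This buys a uniform, elementary treatment of the general weighted case and makes the existence half explicit, which the paper leaves implicit in the citations. One small imprecision: a component-difference $\mathbf{1}_{C_i}-\mathbf{1}_{C_j}$ is \emph{not} in general orthogonal to $\mathbf{e}$ (only when $|C_i|=|C_j|$), so to exhibit a second normalized solution you should perturb by something like $|C_j|\mathbf{1}_{C_i}-|C_i|\mathbf{1}_{C_j}$; your preceding dimension count already closes the argument, so this only affects the ``concretely'' sentence, not the validity of the proof.
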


\begin{proof}
In the unweighted case, see \citet[Theorem 4]{BozokiFulopRonyai2010}. The same theorem was proved by \citet[p. 426]{KaiserSerlin1978} in a different way.

The general weighted case is examined in \citet{BozokiCsatoRonyaiTapolcai2014} and \citet{Gonzalez-DiazHendrickxLohmann2013}. 
\citet[p. 220]{ChebotarevShamis1999} mention this fact without further discussion. 
\end{proof}

If there is no relation between two groups of objects $N_1$ and $N_2$ as graph $G$ is not connected, then it seems strange to rank them on the same scale. 

Now another solution is given for the least squares problem based on $n \times n$ matrices and $n$-dimensional vectors. In this sense it differs from the proposals of \citet{Gulliksen1956} and \citet{BozokiFulopRonyai2010}, but it is similar to the approach of \citet{KaiserSerlin1978}. 

Let $I$ be the $n \times n$ identity matrix as before, $J$ be the $n \times n$ matrix of $1$'s and, with a slight abuse of notation, $\mathbf{0}$ be both the $n \times n$ matrix and the $n$-dimensional vector of $0$'s.  
We adjust the Laplacian matrix in order to eliminate its zero eigenvalue.

\begin{lemma} \label{Lemma32}
Let $G$ be a connected comparison multigraph. Then $\mu_{n-1} > 0$, the matrix $L + (1/n) J$ is nonsingular with eigenvalues $\mu_1, \mu_2, \dots ,\mu_{n-1}, 1$. If $L^+$ is the Moore-Penrose generalized inverse of $L$, then $\left[ L + (1/n) J \right]^{-1} = L^+ + (1/n) J$.
\end{lemma}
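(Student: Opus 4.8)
The plan is to work in an orthonormal eigenbasis of the symmetric positive semidefinite matrix $L$ and to exploit the fact that, for a connected comparison multigraph, the kernel of $L$ is exactly the one-dimensional span of $\mathbf{e}$. Since $J = \mathbf{e}\mathbf{e}^\top$, the matrix $(1/n)J$ is precisely the orthogonal projection onto $\mathrm{span}(\mathbf{e})$, while $L$ annihilates that line and is invertible on the complementary subspace $\mathbf{e}^\perp$. The positivity $\mu_{n-1} > 0$ is immediate from Lemma~\ref{Lemma31}, since connectedness of $G$ forces the multiplicity of the zero eigenvalue to be one.

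For the eigenvalue claim, I would fix an orthonormal basis $\mathbf{v}_1, \dots, \mathbf{v}_n$ of eigenvectors of $L$ with $L\mathbf{v}_i = \mu_i \mathbf{v}_i$, chosen so that $\mathbf{v}_n = \mathbf{e}/\sqrt{n}$ spans the kernel ($\mu_n = 0$); the remaining vectors $\mathbf{v}_1, \dots, \mathbf{v}_{n-1}$ then lie in $\mathbf{e}^\perp$. Because $J\mathbf{v}_i = \mathbf{e}(\mathbf{e}^\top \mathbf{v}_i) = \mathbf{0}$ for $i < n$ and $J\mathbf{v}_n = n\mathbf{v}_n$, one computes $[L + (1/n)J]\mathbf{v}_i = \mu_i \mathbf{v}_i$ for $i = 1, \dots, n-1$ and $[L + (1/n)J]\mathbf{v}_n = (0+1)\mathbf{v}_n = \mathbf{v}_n$. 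Hence the spectrum is exactly $\mu_1, \dots, \mu_{n-1}, 1$, and since each of these is strictly positive, $L + (1/n)J$ is nonsingular.

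For the inverse formula, I would verify the identity by direct multiplication rather than by re-diagonalizing. Expanding,
\[
\left[ L + \tfrac{1}{n}J \right]\left[ L^+ + \tfrac{1}{n}J \right] = L L^+ + \tfrac{1}{n}LJ + \tfrac{1}{n}J L^+ + \tfrac{1}{n^2}J^2 .
\]
The two mixed terms vanish: $LJ = (L\mathbf{e})\mathbf{e}^\top = \mathbf{0}$, and, using symmetry of $L^+$ together with $L^+\mathbf{e} = \mathbf{0}$ (because $\mathbf{e} \in \ker L = \ker L^+$), also $JL^+ = \mathbf{e}(L^+\mathbf{e})^\top = \mathbf{0}$. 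Since $J^2 = nJ$, the last term equals $(1/n)J$. Finally, for symmetric $L$ the product $LL^+$ is the orthogonal projection onto the range of $L$, which is $\mathbf{e}^\perp$, so $LL^+ = I - (1/n)J$. Adding the surviving terms gives $(I - (1/n)J) + (1/n)J = I$, which establishes the formula.

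The computation is essentially routine once this geometry is set up; the only point requiring genuine care is the handling of the Moore--Penrose inverse, specifically that $L^+$ is symmetric, that $L$ and $L^+$ share the kernel $\mathrm{span}(\mathbf{e})$, and that $LL^+$ is the projection onto $\mathbf{e}^\perp$ rather than the identity. This is exactly where connectedness is used: it guarantees that the kernel is spanned by $\mathbf{e}$ alone, so that the added rank-one term $(1/n)J$ accounts for the entire kernel. Without connectedness the kernel would be higher-dimensional, the cancellation $LL^+ = I - (1/n)J$ would fail, and the clean inverse formula would break down.
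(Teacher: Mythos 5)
Your proof is correct and follows essentially the same route as the paper: the same orthonormal-eigenbasis argument (with $\mathbf{e}/\sqrt{n}$ spanning the kernel) for the spectrum of $L + (1/n)J$, and the same direct expansion of $\left[ L + (1/n)J \right]\left[ L^+ + (1/n)J \right]$ for the inverse formula. The only difference is that where the paper cites Kwiesielewicz (1996) for the identities $LL^+ = I - (1/n)J$ and $JL^+ = \mathbf{0}$, you derive them from standard Moore--Penrose facts (symmetry of $L^+$, the shared kernel $\mathrm{span}(\mathbf{e})$, and $LL^+$ being the orthogonal projection onto $\mathrm{range}(L) = \mathbf{e}^\perp$), which makes your version self-contained.
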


\begin{proof}
This formula is well-known in the literature, it has been rediscovered several times. The first appearance may be in \citet{SharpeStyan1965}, see also \citet[Theorem 10.1.2]{RaoMitra1971} and \citet[Propositions 15 and 16]{ChebotarevAgaev2002}. Here we give a new proof.

The unweighted case is discussed in \citet[Theorems 4 and 5]{GutmanXiao2004}.

For the general weighted version, $\mu_{n-1} > 0$ was proved in Lemma \ref{Lemma31}. It is always possible to choose the Laplacian eigenvectors to be real, normalized and mutually orthogonal. The eigenvector corresponding to $\mu_n = 0$ is of the form $\mathbf{u}^{(n)} = \left[ 1,1, \dots ,1 \right]^\top, \mathbf{u}^{(n)} \in \mathbb{R}^n$. Since the eigenvectors $\mathbf{u}^{(k)}, k = 1,2, \dots ,n-1$, are orthogonal to $\mathbf{u}^{(n)}$, $\sum_{j=1}^n u_j^{(k)} = 0$ is satisfied for all $k = 1,2, \dots ,n-1$. Then the extension of the result of \citet[Theorem 4]{GutmanXiao2004} to multigraphs is obvious as $\left[ L + (1/n) J \right] \mathbf{u}^{(k)} = L \mathbf{u}^{(k)} + (1/n) J \mathbf{u}^{(k)} = L \mathbf{u}^{(k)} + \mathbf{0} = \mu_k \mathbf{u}^{(k)}$ for all $k = 1,2, \dots ,n-1$ and $\left[ L + (1/n) J \right] \mathbf{u}^{(n)} = L \mathbf{u}^{(n)} + (1/n) J \mathbf{u}^{(n)} = \mathbf{0} + (1/n) n \mathbf{u}^{(n)} = \mathbf{u}^{(n)}$, thus the last eigenvalue of $L + (1/n) J$ is equal to $1$ with the corresponding eigenvector $\mathbf{u}^{(n)}$.

\citet{Kwiesielewicz1996} shows that $L L^+ = L^+ L = I - (1/n) J$ is provided in the weighted case, too. It implies that $\left[ L + (1/n) J \right] \left[ L^+ + (1/n) J \right] = L L^+ + \mathbf{0} + \mathbf{0} + (1/n)^2 J^2 = I - (1/n) J + (1/n)^2 n J = I$, since $J L^+ = \mathbf{0}$ \citep[Theorem 4]{Kwiesielewicz1996}, consequently $\left[ L + (1/n) J \right]^{-1} = L^+ + (1/n) J$. 
\end{proof}

\citet{KaiserSerlin1978} use the matrix $L + J$ in the unweighted case to circumvent the singularity of $L$.

\begin{theorem} \label{Theo31}
Let $G$ be a connected comparison multigraph. The unique solution of the least squares problem is
\[
\mathbf{q} = L^+ \mathbf{s} = \left[ L + (1/n) J \right]^{-1} \mathbf{s}.
\]
\end{theorem}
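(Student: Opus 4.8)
The plan is to exhibit the candidate vector $\mathbf{q} = \left[ L + (1/n) J \right]^{-1} \mathbf{s}$, which is well-defined by Lemma \ref{Lemma32} (the matrix $L + (1/n) J$ is nonsingular whenever $G$ is connected), and then verify directly that it satisfies the two defining conditions of the least squares method, $L \mathbf{q} = \mathbf{s}$ and $\mathbf{e}^\top \mathbf{q} = 0$. Uniqueness is free: Proposition \ref{Prop31} guarantees that a connected $G$ admits only one least squares solution, so once a single vector meeting both conditions is produced, the proof is complete.

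The one structural fact I would record at the outset is that the scores vector sums to zero, $\mathbf{e}^\top \mathbf{s} = \sum_{i,j} a_{ij} = 0$, because the results matrix $A$ is skew-symmetric. Writing $J = \mathbf{e} \mathbf{e}^\top$, this gives $J \mathbf{s} = \mathbf{e} (\mathbf{e}^\top \mathbf{s}) = \mathbf{0}$. Combined with the identity $\left[ L + (1/n) J \right]^{-1} = L^+ + (1/n) J$ from Lemma \ref{Lemma32}, this at once yields the equality of the two expressions claimed in the statement, since $\left[ L + (1/n) J \right]^{-1} \mathbf{s} = L^+ \mathbf{s} + (1/n) J \mathbf{s} = L^+ \mathbf{s}$.

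It then remains to check the two defining conditions. For the normalization, I would use the relation $J L^+ = \mathbf{0}$ established inside the proof of Lemma \ref{Lemma32} (this is \citet[Theorem 4]{Kwiesielewicz1996}): since $J = \mathbf{e} \mathbf{e}^\top$ and $\mathbf{e} \neq \mathbf{0}$, this forces $\mathbf{e}^\top L^+ = \mathbf{0}^\top$, whence $\mathbf{e}^\top \mathbf{q} = \mathbf{e}^\top L^+ \mathbf{s} = 0$. The equation $L \mathbf{q} = \mathbf{s}$ then follows from the construction of $\mathbf{q}$: we have $\left[ L + (1/n) J \right] \mathbf{q} = \mathbf{s}$ by definition, while $(1/n) J \mathbf{q} = (1/n) \mathbf{e} (\mathbf{e}^\top \mathbf{q}) = \mathbf{0}$ by the normalization just proved, so the $J$-term drops out and leaves $L \mathbf{q} = \mathbf{s}$.

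I do not expect a genuine obstacle here; once Lemma \ref{Lemma32} is available, the argument is almost entirely bookkeeping. The only place where the structure of the ranking problem (rather than pure linear algebra) intervenes is the observation $\mathbf{e}^\top \mathbf{s} = 0$, and this is precisely what makes the two inverses agree on $\mathbf{s}$ and what causes the spurious $J$-term to vanish in both verifications.
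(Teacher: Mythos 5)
Your proposal is correct and rests on exactly the same ingredients as the paper's proof -- Lemma \ref{Lemma32}, the identities $J L^+ = L^+ J = \mathbf{0}$ and $J \mathbf{s} = \mathbf{0}$, and Proposition \ref{Prop31} for uniqueness. The only difference is packaging: the paper runs a chain of equivalent transformations of $L \mathbf{q} = \mathbf{s}$, whereas you verify the explicit candidate directly, which amounts to the same computation.
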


\begin{proof}
Lemma \ref{Lemma32} provides the following equivalent transformations of the least squares problem:
\[
L \mathbf{q} = \mathbf{s} \Leftrightarrow \left[ L + (1/n) J \right] \mathbf{q} = \mathbf{s} + (1/n) J \mathbf{q} \Leftrightarrow \mathbf{q} = \left[ L^+ + (1/n) J \right] \mathbf{s} + (1/n) \left[ L^+ + (1/n) J \right] J \mathbf{q}.
\]
Since $L^+ J = \mathbf{0}$ \citep[Theorem 4]{Kwiesielewicz1996}, $J^2 = nJ$, and $J \mathbf{s} = (\sum_{i=1}^n s_i) \mathbf{e} = \mathbf{0}$ due to $\mathbf{e}^\top \mathbf{s} = \sum_{i=1}^n s_i = 0$:
\[
L \mathbf{q} = \mathbf{s} \Leftrightarrow \mathbf{q} = L^+ \mathbf{s} + (1/n) J \mathbf{q}.
\]
Normalization $J \mathbf{q} = \mathbf{0}$ can only be done if $J L^+ \mathbf{s} = \mathbf{0}$, which is satisfied because $J L^+ = \mathbf{0}$. 
\end{proof}

This solution concept resolves the problem of the singularity of $L$, while simple calculation is preserved since $L^+$ can be obtained through the identity $L^+ L = L L^+ = I - (1/n)J$. Theorem \ref{Theo31} is mainly a technical result, it means a step towards the iterative calculation of the least squares method.

\section{The iterative calculation of the least squares rating} \label{Sec4}

In this section an iterative process is given for the calculation of the optimal least squares solution. \citet{Gulliksen1956} offers a similar approach but his method is based on the choice of an arbitrary solution $\mathbf{q}$ and adjusting it according to the error term $\mathbf{s} - L \mathbf{q}$. Our proposal starts with the first estimation of ratings by the row sums $\mathbf{s}$ which will be updated by the scores of objects compared with it, and so on. It is similar to the rating method called \emph{recursive Buchholz}, defined on the aggregated paired comparison matrix $R$ \citep{Gonzalez-DiazHendrickxLohmann2013}. However, the latter uses an 'average' setting, the modified scores vector $\mathbf{s}'$ and matches matrix $M'$ after division by the number of comparisons $d_i = \sum_{j=1}^n m_{ij}$ for each $X_i$. Interestingly, despite the different approach, the recursive Buchholz ranking coincides with the one obtained from the least squares solution, it gives a rating vector which is the half of $\mathbf{q}$ \citep[Proposition 3.1]{Gonzalez-DiazHendrickxLohmann2013}.

Recursive Buchholz is a special case of the \emph{recursive performance} defined by \citet{Brozos-Vazquezetal2008} where uniqueness is proved for any matches matrix $M$, which is not block diagonal (comparison multigraph $G$ is connected) and not block anti-diagonal ($G$ is not bipartite). It was shown in Section \ref{Sec3} that the first condition is necessary for the uniqueness of the least squares solution as well, while the second one requires some comments. A block anti-diagonal matches matrix represents a comparison structure similar to a 'team tournament' where the objects (players) have two disjoint subsets (teams) such that players in one team do not play against their teammates. Thus the ratings of the players in one team can be calculated only through the ratings of the players in the other team and this cyclic feature thwarts the convergence of the iteration. An analogous problem will also emerge in our discussion.

A \emph{digraph} is an irreflexive directed graph consisting of a finite set of nodes $N$ and a collection of ordered pairs $P$ of these nodes. An edge from node $X_i$ to node $X_j$ represents a dominance relation of the former over the latter, and is represented by $(i,j) \in P$. 
In our setting it may be discussed as a ranking problem $(N,A,M)$, where $N$ is the set of nodes, the elements of the results matrix $A$ are restricted by $a_{ij} \in \{ -1,0,1 \}$ and the matches matrix $M$ is defined by $m_{ij} = 1$ if and only if $\{ (i,j),(j,i) \} \cap P \neq \emptyset$. Furthermore, $a_{ij} = 1 \Leftrightarrow \left[ (i,j) \in P \text{ and } (j,i) \notin P \right]$, $a_{ij} = -1 \Leftrightarrow \left[ (j,i) \in P \text{ and } (i,j) \notin P \right]$, and $a_{ij} = 0 \Leftrightarrow \left[ \{ (i,j),(j,i) \} \cap P = \emptyset \text{ or } (i,j),(j,i) \in P \right]$, but in the latter case there is a match between objects $X_i$ and $X_j$, therefore $m_{ij} = m_{ji} = 1$. This correspondence is clearly not unique; it can be legitimately argued that edges in both directions mean two matches between the associated objects.

\citet{HeringsvanderLaanTalman2005} define the \emph{positional power} of nodes in digraphs and prove that it can be obtained as the limit point of an iterative process. More details about this method will be given later in order to show its common roots with our iterative solution for the least squares method.

\citet{Chebotarev1994} gives a decomposition of the generalized row sum method by the powers of the parameter  $\varepsilon$. Let $\mu_1$ be the greatest eigenvalue of $L$.

\begin{proposition}
For all $0 < \varepsilon < 1/ \mu_1$ the generalized row sum rating vector is:
\[
\mathbf{x} = \left[ \sum_{k=0}^{\infty} \varepsilon^k (-L)^k \right] (1 + \varepsilon mn) \mathbf{s} = (1 + \varepsilon mn) \mathbf{s} - \varepsilon L (1 + \varepsilon mn) \mathbf{s} + \varepsilon^2 L^2 (1 + \varepsilon mn) \mathbf{s} - \varepsilon^3 L^3 (1 + \varepsilon mn) \mathbf{s} + \dots
\]
\end{proposition}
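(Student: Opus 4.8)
The plan is to recognise the series as the Neumann (geometric) expansion of the inverse $(I + \varepsilon L)^{-1}$ that is implicit in the definition of the generalized row sum. By Definition \ref{Def23}, $\mathbf{x}(\varepsilon)$ solves $(I + \varepsilon L)\mathbf{x}(\varepsilon) = (1 + \varepsilon m n)\mathbf{s}$. Since $L$ is positive semidefinite with eigenvalues $\mu_1 \geq \dots \geq \mu_n = 0$, the matrix $I + \varepsilon L$ has eigenvalues $1 + \varepsilon \mu_i \geq 1 > 0$ for every $\varepsilon \geq 0$ and is therefore nonsingular; hence $\mathbf{x} = (I + \varepsilon L)^{-1} (1 + \varepsilon m n)\mathbf{s}$.

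First I would write $(I + \varepsilon L)^{-1} = \bigl( I - (-\varepsilon L) \bigr)^{-1}$ and apply the Neumann series identity $(I - B)^{-1} = \sum_{k=0}^\infty B^k$, valid whenever the spectral radius $\rho(B) < 1$, with $B = -\varepsilon L$. The eigenvalues of $-\varepsilon L$ are $-\varepsilon \mu_i$, so $\rho(-\varepsilon L) = \varepsilon \mu_1$, and the hypothesis $0 < \varepsilon < 1/\mu_1$ is exactly the condition $\varepsilon \mu_1 < 1$ needed for convergence. This gives $(I + \varepsilon L)^{-1} = \sum_{k=0}^\infty (-\varepsilon L)^k = \sum_{k=0}^\infty \varepsilon^k (-L)^k$, and multiplying on the right by $(1 + \varepsilon m n)\mathbf{s}$ produces the first displayed expression; the second is merely its term-by-term expansion.

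The argument is essentially routine, so there is no serious obstacle; the only point requiring care is the convergence justification. Because $L$ is symmetric, I would in fact prefer to make the argument self-contained rather than cite the Neumann series theorem: diagonalise $L = \sum_i \mu_i \mathbf{u}^{(i)} (\mathbf{u}^{(i)})^\top$ in an orthonormal eigenbasis (available exactly as in the proof of Lemma \ref{Lemma32}), so that $(I + \varepsilon L)^{-1} = \sum_i (1 + \varepsilon \mu_i)^{-1} \mathbf{u}^{(i)} (\mathbf{u}^{(i)})^\top$, and on each eigenspace expand the scalar $(1 + \varepsilon \mu_i)^{-1} = \sum_{k=0}^\infty (-\varepsilon \mu_i)^k$ as a geometric series, which converges precisely when $\varepsilon \mu_i < 1$. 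Since the largest eigenvalue $\mu_1$ gives the binding constraint, the condition $\varepsilon < 1/\mu_1$ guarantees convergence on every eigenspace simultaneously; summing over $i$ reassembles $\sum_{k=0}^\infty (-\varepsilon L)^k$ and sidesteps any appeal to a general matrix convergence result.
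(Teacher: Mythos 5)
Your argument is correct. The only caveat in comparing it with the paper is that the paper does not actually prove this proposition: it simply cites \citet[Property 12]{Chebotarev1994}. Your derivation is the natural one, and it is exactly the technique the paper itself deploys a few paragraphs later, where Lemma \ref{Lemma41} (the Neumann series) is applied to $\left[ L + (1/n)J \right]^{-1}$ to obtain Theorem \ref{Theo41}. Concretely: Definition \ref{Def23} gives $\mathbf{x} = (I+\varepsilon L)^{-1}(1+\varepsilon mn)\mathbf{s}$ (the inverse exists for all $\varepsilon \geq 0$ since $L$ is positive semidefinite), and with $B = -\varepsilon L$ the spectrum of $B$ is $\{-\varepsilon\mu_i\} \subseteq [-\varepsilon\mu_1, 0]$, so the hypothesis $0 < \varepsilon < 1/\mu_1$ is precisely the spectral-radius condition of Lemma \ref{Lemma41}; the series identity follows. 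Your alternative self-contained version via the orthonormal eigendecomposition of the symmetric matrix $L$ and scalar geometric series on each eigenspace is also sound and avoids invoking the general Neumann series theorem, which is a mild gain in self-containment at the cost of a slightly longer write-up. Either version is a legitimate replacement for the paper's bare citation; if you wanted to match the paper's internal style most closely, the shortest route is to observe that the claim is Lemma \ref{Lemma41} applied to $B=-\varepsilon L$, with the eigenvalue bound supplied by the positive semidefiniteness of $L$ and the assumption $\varepsilon\mu_1 < 1$.
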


\begin{proof}
See \citet[Property 12]{Chebotarev1994}. 
\end{proof}
In particular,
\[
x_i = s_i + \varepsilon \left[ (mn-d_i) s_i + \sum_{j \neq i} m_{ij} s_i \right] + o(\varepsilon).
\]

A similar decomposition of the least squares rating is based on Theorem \ref{Theo31} and on the \emph{Neumann series} \citep{Neumann1877} of $\left[ L + (1/n) J \right]^{-1}$.

\begin{lemma} \label{Lemma41}
Let $B \in \mathbb{R}^{n \times n}$. The following statements are equivalent:
\begin{enumerate}
\item
The Neumann series $\sum_{k=0}^{\infty} B^k = I + B + B^2 + B^3 + \dots$ converges;
\item
All eigenvalues $\lambda$ of $B$ are in the interior of the unit circle, that is, $\max \{ |\lambda|: \lambda \mathbf{y} = B \mathbf{y} \} < 1$;
\item
$\lim_{n \to \infty} B^n = \mathbf{0}$.
\end{enumerate}
In which case, $(I-B)^{-1}$ exists, and
\[
\left( I-B \right)^{-1} = \sum_{k=0}^{\infty} B^k = I + B + B^2 + B^3 + \dots \, .
\]
\end{lemma}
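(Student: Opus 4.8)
The plan is to prove the three conditions equivalent by a short cycle of implications, and then extract the closed form for $(I-B)^{-1}$ once convergence is secured. I would arrange the cycle as $(1) \Rightarrow (3) \Rightarrow (2) \Rightarrow (1)$, so that each arrow rests on a single clean idea. The implication $(1) \Rightarrow (3)$ is immediate: if the series $\sum_{k=0}^{\infty} B^k$ converges (entrywise, equivalently in any matrix norm), then its general term must tend to the zero matrix, which is exactly $\lim B^k = \mathbf{0}$. For $(3) \Rightarrow (2)$, let $\lambda$ be any eigenvalue with eigenvector $\mathbf{y} \neq \mathbf{0}$; then $B^k \mathbf{y} = \lambda^k \mathbf{y}$, so $B^k \to \mathbf{0}$ forces $\lambda^k \mathbf{y} \to \mathbf{0}$, hence $\lambda^k \to 0$ and therefore $|\lambda| < 1$.

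The main obstacle is the reverse arrow $(2) \Rightarrow (1)$, where the purely spectral hypothesis must be converted into a genuine decay estimate on the norms $\|B^k\|$. I would invoke Gelfand's spectral radius formula $\rho(B) = \lim_{k \to \infty} \|B^k\|^{1/k}$: if every eigenvalue satisfies $|\lambda| < 1$ then $\rho(B) < 1$, so choosing $r$ with $\rho(B) < r < 1$ gives $\|B^k\| < r^k$ for all large $k$, whence $\sum_k \|B^k\|$ is dominated by a convergent geometric series and the Neumann series converges absolutely. Alternatively, passing to the Jordan normal form $B = P J P^{-1}$ and estimating powers of Jordan blocks, whose entries have the shape $\binom{k}{j} \lambda^{k-j}$, yields the same geometric decay; this route avoids citing Gelfand's formula at the cost of a short computation. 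Either way, this step is where the substance of the lemma lies.

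Finally, for the closed form, I note that condition $(2)$ excludes $1$ as an eigenvalue of $B$, so $I - B$ is nonsingular. Telescoping gives $(I-B)\sum_{k=0}^{N} B^k = I - B^{N+1}$, and letting $N \to \infty$ while using $(3)$ to send $B^{N+1} \to \mathbf{0}$ yields $(I-B)\sum_{k=0}^{\infty} B^k = I$. Multiplying on the left by $(I-B)^{-1}$ then identifies the sum of the series with $(I-B)^{-1}$, completing the argument.
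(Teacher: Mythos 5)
Your proof is correct, but it differs from the paper in an essentially bibliographic way: the paper offers no argument at all for this lemma, simply citing it as a special case of the Neumann series theorem in Meyer (2000, p.~618). What you have written out is, in substance, the standard textbook proof that the citation points to: the cycle $(1) \Rightarrow (3) \Rightarrow (2) \Rightarrow (1)$, with the only nontrivial arrow being $(2) \Rightarrow (1)$, handled either by Gelfand's formula $\rho(B) = \lim_k \|B^k\|^{1/k}$ or by estimating powers of Jordan blocks, followed by the telescoping identity $(I-B)\sum_{k=0}^{N} B^k = I - B^{N+1}$ for the closed form. All steps check out, including the small but necessary observations that convergence of a series forces its terms to vanish, that $B^k \mathbf{y} = \lambda^k \mathbf{y}$ with $\mathbf{y} \neq \mathbf{0}$ forces $|\lambda| < 1$ when $B^k \to \mathbf{0}$, and that condition $(2)$ rules out $1$ as an eigenvalue so that $I - B$ is invertible. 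What your version buys is self-containedness and a clear identification of where the substance lies (converting the spectral hypothesis into genuine norm decay); what the paper's citation buys is brevity for a result that is entirely standard and peripheral to its main contribution. One tiny point of care if you keep the eigenvalue argument: since $B$ is real but its eigenvalues may be complex, the eigenvectors live in $\mathbb{C}^n$, which is harmless here but worth stating.
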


\begin{proof}
It is a special case of the theorem for Neumann series in \citet[p. 618]{Meyer2000}. 
\end{proof}

In order to decompose the least squares rating vector $\mathbf{q}$, the Neumann series should be applied for $L + (1/n)J$ used in Theorem \ref{Theo31}. Therefore, some results are necessary about its eigenvalues.
According to the \emph{Ger\v sgorin theorem} \citep{Gersgorin1931}, all eigenvalues of the Laplacian matrix $L$ lie within the closed interval $\left[ 0, 2 \mathfrak{d} \right]$, where $\mathfrak{d} = \max \{ d_i: i = 1,2, \dots ,n \}$ is the \emph{maximal number of comparisons} with the other objects. In the unweighted case $\mathfrak{d} \leq n-1$, and for a round-robin ranking problem $\mathfrak{d} = n-1$.

A \emph{regular} graph is a graph such that every vertex has the same degree. A \emph{semiregular bipartite} (or biregular) graph is a bipartite graph, for which every two vertices on the same side of the given partition have the same degree.

\begin{lemma} \label{Lemma42}
Let $G$ be a graph with a decreasing degree sequence $\mathfrak{d} = d_1 \geq d_2 \geq \dots \geq d_{n}$ ($d_i = \sum_{j=1}^n m_{ij} = - \sum_{j \neq i} \ell_{ij}$) and $L$ be the Laplacian matrix of $G$. Then
\[
\mu_1 \leq  2 \mathfrak{d}.
\]
If $G$ is connected, equality holds if and only if $G$ is a regular bipartite graph.
\end{lemma}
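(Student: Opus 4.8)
The plan is to prove both the bound and the equality characterisation through the Rayleigh quotient $\mu_1 = \max_{\mathbf{x} \neq \mathbf{0}} \mathbf{x}^\top L \mathbf{x} / (\mathbf{x}^\top \mathbf{x})$, which is available because $L$ is real symmetric and positive semidefinite. The starting point is the edge-form identity $\mathbf{x}^\top L \mathbf{x} = \sum_{i<j} m_{ij} (x_i - x_j)^2$, obtained directly from $L = \mathrm{diag}(d_1, \dots, d_n) - M$ together with $\sum_{j \neq i} m_{ij} = d_i$. The bound then follows from the elementary inequality $(x_i - x_j)^2 \leq 2(x_i^2 + x_j^2)$: summing against the nonnegative weights $m_{ij}$ gives $\mathbf{x}^\top L \mathbf{x} \leq 2 \sum_i d_i x_i^2 \leq 2 \mathfrak{d}\, \mathbf{x}^\top \mathbf{x}$, hence $\mu_1 \leq 2 \mathfrak{d}$. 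This reproves, with a sharper handle on the equality case, the Ger\v sgorin bound already recorded before the lemma.

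For the equality characterisation I would track exactly when both inequalities become tight for a maximiser $\mathbf{x}$, which is then an eigenvector for $\mu_1$. Since $2(x_i^2 + x_j^2) - (x_i - x_j)^2 = (x_i + x_j)^2$, the first inequality is an equality on an edge $(i,j)$ if and only if $x_i = -x_j$, and the second is an equality if and only if $x_i = 0$ for every vertex with $d_i < \mathfrak{d}$. Assuming $G$ connected, the relation $x_i = -x_j$ along every edge forces $|x_i|$ to be constant over all of $G$, and since $\mathbf{x} \neq \mathbf{0}$ this common modulus is nonzero. The vanishing condition then implies $d_i = \mathfrak{d}$ for all $i$, so $G$ is $\mathfrak{d}$-regular; and the consistency of the alternating signs $x_i = -x_j$ around every cycle forces all cycles to have even length, so $G$ is bipartite. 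Thus $\mu_1 = 2 \mathfrak{d}$ implies $G$ is regular bipartite.

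For the converse I would exhibit the eigenvector explicitly. If $G$ is connected, $\mathfrak{d}$-regular and bipartite with classes $U$ and $V$, set $x_i = 1$ for $i \in U$ and $x_i = -1$ for $i \in V$. A direct computation of $L \mathbf{x} = \left( \mathrm{diag}(d_i) - M \right) \mathbf{x}$ yields $(L \mathbf{x})_i = \mathfrak{d} x_i - (-\mathfrak{d} x_i) = 2 \mathfrak{d} x_i$, since every neighbour of a vertex in $U$ lies in $V$ and conversely. Hence $2 \mathfrak{d}$ is an eigenvalue of $L$, and combined with the upper bound it must equal $\mu_1$, completing the equivalence.

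The main obstacle is the forward direction of the equality case, precisely the step where connectedness is used to pass from the local, edge-by-edge conditions to the global structural conclusions. The sign rule $x_i = -x_j$ propagates a well-defined two-colouring and a constant modulus only because every pair of vertices is joined by a path; I would make this rigorous by induction along a spanning tree of $G$, and then use the remaining edges both to rule out odd cycles (forcing bipartiteness) and to confirm that $\mathbf{x}$ vanishes nowhere. Some care is also needed to carry the argument through the multigraph setting, noting that the multiplicities $m_{ij} \geq 1$ affect neither the sign condition nor bipartiteness, and to see that the two tightness conditions act jointly so as to yield full regularity rather than only one of the two properties.
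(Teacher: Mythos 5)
Your proof is correct, but it takes a genuinely different route from the paper. The paper disposes of the lemma by citation: it invokes the Anderson--Morley bound $\mu_1 \leq \max \{ d(u) + d(v) : (u,v) \in E(G) \}$ together with its equality characterisation (semiregular bipartite graphs), and appeals to Mohar's Theorem 2.2 for the extension to multigraphs; the passage from ``semiregular bipartite with an edge joining two maximum-degree vertices'' to ``regular bipartite'' is left implicit. You instead give a self-contained Rayleigh-quotient argument: the edge form $\mathbf{x}^\top L \mathbf{x} = \sum_{i<j} m_{ij}(x_i - x_j)^2$, the pointwise inequality $(x_i - x_j)^2 \leq 2(x_i^2 + x_j^2)$ with defect $(x_i + x_j)^2$, and the weighted summation $\sum_{i<j} m_{ij}(x_i^2 + x_j^2) = \sum_i d_i x_i^2 \leq \mathfrak{d}\, \mathbf{x}^\top \mathbf{x}$. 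This buys several things: the multigraph/weighted case is handled natively (only nonnegativity of the $m_{ij}$ is used, so no appeal to Mohar is needed); the equality analysis directly targets $2\mathfrak{d}$ rather than the sharper edge-degree-sum bound, so the two tightness conditions (alternating signs on edges, vanishing at non-maximal-degree vertices) combine under connectedness to give regularity and bipartiteness in one stroke; and the converse is verified by an explicit $\pm 1$ eigenvector. The one step you flag as needing care --- propagating $x_i = -x_j$ along paths to get constant nonzero modulus and a consistent two-colouring --- is exactly right and is standard; the spanning-tree induction you sketch closes it. Your version is arguably preferable as a proof to include, since the paper's citation chain does not quite spell out why the equality case is \emph{regular} rather than merely semiregular bipartite.
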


\begin{proof}
In the unweighted case, $\mu_1 \leq \max \{ d(u) + d(v) | (u,v) \in E(G) \}$ and equality holds if and only if $G$ is a semiregular bipartite graph \citep{AndersonMorley1985}. It carries over to multigraphs since the number of matches between two objects is nonnegative \citep[Theorem 2.2]{Mohar1991}.
\end{proof}

Notice that if the comparison multigraph $G$ of the ranking problem $(N,A,M)$ is regular bipartite, then $M$ has a block anti-diagonal structure, but the reverse of the implication does not hold.
 
Let us introduce the $n \times n$ real matrix $C$ with $c_{ij} = -\ell_{ij} = m_{ij}$ for all $i \neq j$ and $c_{ii} = \mathfrak{d} - \ell_{ii} = \mathfrak{d} - d_i = \mathfrak{d} - \sum_{j=1}^n m_{ij}$. $C$ is the same as the matches matrix outside the diagonal, where elements are increased (except for the object(s) with maximal comparisons) in order to provide that the sum of all row (and column) is equal. Then $L = \mathfrak{d} I - C$, therefore
\[
\left[ L + (1/n) J \right]^{-1} = \left[ \mathfrak{d} I - C + (1/n) J \right]^{-1} = \frac{1}{\mathfrak{d}} \left[ I -  \frac{1}{\mathfrak{d}} \left( C - \frac{1}{n} J \right) \right]^{-1}.
\]
In the following, stochastic matrix $(1 / \mathfrak{d}) C$ is denoted by $P$.

\begin{theorem} \label{Theo41}
Let the comparison multigraph be connected, and not regular bipartite. The unique solution $\mathbf{q}$ of the least squares problem is
\[
\mathbf{q} = \frac{1}{\mathfrak{d}} \sum_{k=0}^{\infty} P^k \mathbf{s} = \frac{1}{\mathfrak{d}} \left( \mathbf{s} + P \mathbf{s} + P^2 \mathbf{s} + P^3 \mathbf{s} + \dots \right).
\]
\end{theorem}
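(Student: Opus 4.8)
The plan is to start from Theorem~\ref{Theo31}, which gives $\mathbf{q} = \left[ L + (1/n) J \right]^{-1} \mathbf{s}$, and to combine it with the factorization displayed just before the theorem,
\[
\left[ L + (1/n) J \right]^{-1} = \frac{1}{\mathfrak{d}} \left[ I - \frac{1}{\mathfrak{d}} \left( C - \frac{1}{n} J \right) \right]^{-1}.
\]
Writing $B = \frac{1}{\mathfrak{d}}\left( C - \frac{1}{n} J \right)$ and using $L = \mathfrak{d} I - C$, one sees that $B = I - \frac{1}{\mathfrak{d}}\left[ L + (1/n)J \right]$ and also $B = P - \frac{1}{\mathfrak{d} n} J$. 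The task then reduces to two points: showing that the Neumann series of $B$ converges, and showing that once the series is applied to the score vector $\mathbf{s}$, the matrix $B$ may be replaced by $P$.

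First I would apply Lemma~\ref{Lemma41} to $B$. Its eigenvalues are immediate from Lemma~\ref{Lemma32}: since $L + (1/n)J$ has eigenvalues $\mu_1, \dots, \mu_{n-1}, 1$, the matrix $B = I - (1/\mathfrak{d})\left[ L + (1/n)J \right]$ has eigenvalues $1 - \mu_k/\mathfrak{d}$ for $k = 1, \dots, n-1$ together with $1 - 1/\mathfrak{d}$. I would then verify that every eigenvalue lies strictly inside the unit circle. Connectedness gives $\mu_{n-1} > 0$ by Lemma~\ref{Lemma31}, so $1 - \mu_k/\mathfrak{d} < 1$ for all $k$; the lower bound $1 - \mu_k/\mathfrak{d} > -1$ is equivalent to $\mu_k < 2\mathfrak{d}$, which follows from Lemma~\ref{Lemma42} precisely because $G$ is connected and \emph{not} regular bipartite, yielding the strict inequality $\mu_1 < 2\mathfrak{d}$. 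The remaining eigenvalue satisfies $|1 - 1/\mathfrak{d}| < 1$ since $\mathfrak{d} \geq 1$. Hence Lemma~\ref{Lemma41} applies, $\sum_{k=0}^\infty B^k$ converges to $(I - B)^{-1}$, and $\mathbf{q} = \frac{1}{\mathfrak{d}}(I - B)^{-1}\mathbf{s} = \frac{1}{\mathfrak{d}}\sum_{k=0}^\infty B^k \mathbf{s}$.

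It remains to replace $B$ by $P$ inside the sum, and here I would exploit the normalization $\mathbf{e}^\top \mathbf{s} = 0$. Since $C$ is symmetric with constant row sums $\mathfrak{d}$, the matrix $P = (1/\mathfrak{d})C$ is doubly stochastic, so $\mathbf{e}^\top P = \mathbf{e}^\top$ and therefore $\mathbf{e}^\top P^k \mathbf{s} = \mathbf{e}^\top \mathbf{s} = 0$ for every $k \geq 0$. Writing $J = \mathbf{e}\mathbf{e}^\top$, this gives $J P^k \mathbf{s} = \mathbf{0}$. A short induction then shows $B^k \mathbf{s} = P^k \mathbf{s}$: the base case is $B^0 = P^0 = I$, and from $B = P - (1/(\mathfrak{d}n))J$ one obtains $B^{k+1}\mathbf{s} = B(P^k\mathbf{s}) = P^{k+1}\mathbf{s} - (1/(\mathfrak{d}n)) J P^k \mathbf{s} = P^{k+1}\mathbf{s}$. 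Substituting term by term delivers the claimed identity $\mathbf{q} = \frac{1}{\mathfrak{d}}\sum_{k=0}^\infty P^k \mathbf{s}$.

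The main obstacle is conceptual rather than computational: the matrix Neumann series of $P$ itself diverges, because $P$ is stochastic and carries the eigenvalue $1$ on the eigenvector $\mathbf{e}$. The role of the $-(1/n)J$ correction is exactly to shift that eigenvalue away from the unit circle, producing the well-behaved matrix $B$, so that convergence of $\sum_k P^k \mathbf{s}$ survives only on the subspace where $\mathbf{e}^\top \mathbf{s} = 0$. Care is needed to keep these two facts separate: convergence is established for $B$ as a matrix, whereas the identification with the $P$-series holds only after acting on $\mathbf{s}$. The hypothesis that $G$ is not regular bipartite enters at a single sharp point, guaranteeing the strict spectral bound $\mu_1 < 2\mathfrak{d}$ that is needed for the lower eigenvalue estimate.
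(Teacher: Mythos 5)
Your proof is correct and follows essentially the same route as the paper's: both use the decomposition $L = \mathfrak{d} I - C$, bound the spectrum of $B = \frac{1}{\mathfrak{d}}\left( C - \frac{1}{n} J \right)$ via Lemmas \ref{Lemma32} and \ref{Lemma42} (with the non-regular-bipartite hypothesis delivering the strict bound $\mu_1 < 2\mathfrak{d}$), invoke the Neumann series of Lemma \ref{Lemma41}, and then pass from $B^k\mathbf{s}$ to $P^k\mathbf{s}$ using $J\mathbf{s}=\mathbf{0}$. If anything, your explicit induction using $\mathbf{e}^\top P = \mathbf{e}^\top$ to get $J P^k \mathbf{s} = \mathbf{0}$ for all $k$ is slightly more careful than the paper's one-line justification of that last step.
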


\begin{proof}
Let $\lambda$ be an eigenvalue of $\frac{1}{\mathfrak{d}} \left( C - \frac{1}{n} J \right)$, namely, $\lambda \mathbf{y} = \frac{1}{\mathfrak{d}} \left( C - \frac{1}{n} J \right) \mathbf{y}$ for some $\mathbf{y}$. It implies that $\mathfrak{d} (1 - \lambda) \mathbf{y} = \mathfrak{d} \left[ I -  \frac{1}{\mathfrak{d}} \left( C - \frac{1}{n} J \right) \right] \mathbf{y} = \left( L + \frac{1}{n} J \right) \mathbf{y}$. From Lemma \ref{Lemma32}, $\mathfrak{d} (1 - \lambda) \in \{ \mu_1, \mu_2, \dots, \mu_{n-1}, 1 \}$. Since $\mu_{n-1} > 0$ also holds, $\mathfrak{d} (1 - \lambda) > 0$, thus $\lambda < 1$. As a consequence of Lemma \ref{Lemma42}, we have $\mathfrak{d} (1 - \lambda) \leq  2\mathfrak{d}$, therefore, $\lambda \geq -1$. According to the condition of Theorem \ref{Theo41}, $G$ is connected, the equality holds if and only if $G$ is a regular bipartite graph, resulting in the statement: $G$ is not a regular bipartite comparison multigraph if and only if $\lambda > -1$.

Hence all eigenvalues fulfil the requirement $-1 < \lambda < 1$, Lemma \ref{Lemma41} can be applied for the matrix $B = \frac{1}{\mathfrak{d}} \left( C - \frac{1}{n} J \right)$. By applying the Neumann series on Theorem \ref{Theo31}, we obtain
\[
\mathbf{q} = \left[ L + (1/n) J \right]^{-1} \mathbf{s} = \frac{1}{\mathfrak{d}} \left[ I -  \frac{1}{\mathfrak{d}} \left( C - \frac{1}{n} J \right) \right]^{-1} \mathbf{s} = \frac{1}{\mathfrak{d}} \sum_{k=0}^{\infty} B^k \mathbf{s} = \frac{1}{\mathfrak{d}} \sum_{k=0}^{\infty} \left( P - \frac{1}{\mathfrak{d} n} J \right)^k \mathbf{s}.
\]
But $J \mathbf{s} = \mathbf{0}$, which leads to $\left( P - \frac{1}{\mathfrak{d} n} J \right)^k \mathbf{s} = P^k \mathbf{s}$, therefore the assertion holds. 
\end{proof}

For ranking purposes, the multiplier $(1/ \mathfrak{d}) > 0$ in the decomposition of $\mathbf{q}$ is irrelevant.
It follows from Theorem \ref{Theo41} that the least squares solution can be obtained as a limit point of an iterative process.

\begin{proposition}
Let the comparison multigraph be connected, and not regular bipartite. The unique solution of the least squares problem is $\mathbf{q} = \lim_{k \to \infty} \mathbf{q}^{(k)}$, where
\[
\mathbf{q}^{(0)} = (1/ \mathfrak{d}) \mathbf{s},
\]
\[
\mathbf{q}^{(k)} = \mathbf{q}^{(k-1)} + \frac{1}{\mathfrak{d}} P^k \mathbf{s}, \quad k = 1,2, \dots \, .
\]
\end{proposition}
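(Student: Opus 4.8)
The plan is to observe that the iterates $\mathbf{q}^{(k)}$ are nothing but the partial sums of the Neumann series already established in Theorem \ref{Theo41}, and then to read off convergence from that theorem. So the proof splits into a purely bookkeeping step and an appeal to a result that has already done all the real work.

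First I would unfold the recursion to obtain a closed form. Starting from $\mathbf{q}^{(0)} = (1/\mathfrak{d})\mathbf{s} = (1/\mathfrak{d}) P^0 \mathbf{s}$ and applying $\mathbf{q}^{(k)} = \mathbf{q}^{(k-1)} + (1/\mathfrak{d}) P^k \mathbf{s}$, I would prove by a one-line induction on $k$ that
\[
\mathbf{q}^{(k)} = \frac{1}{\mathfrak{d}} \sum_{j=0}^{k} P^j \mathbf{s}.
\]
The base case $k=0$ is the definition of $\mathbf{q}^{(0)}$, and the inductive step simply appends the term $(1/\mathfrak{d}) P^{k} \mathbf{s}$ to the partial sum of length $k$, producing the partial sum of length $k+1$. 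Thus $\mathbf{q}^{(k)}$ is precisely the $k$-th partial sum of the series appearing in Theorem \ref{Theo41}.

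Then I would invoke Theorem \ref{Theo41} directly. Under the stated hypotheses that $G$ is connected and not regular bipartite, that theorem asserts that the series $(1/\mathfrak{d})\sum_{k=0}^{\infty} P^k \mathbf{s}$ converges to the unique least squares rating $\mathbf{q}$. Since convergence of a series means convergence of its sequence of partial sums to its sum, and since the previous step identifies those partial sums with the $\mathbf{q}^{(k)}$, we conclude $\lim_{k\to\infty}\mathbf{q}^{(k)} = \mathbf{q}$, as claimed.

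I do not expect any genuine obstacle in this proposition. The substantive content — verifying that every eigenvalue $\lambda$ of $B = (1/\mathfrak{d})(C - (1/n)J)$ satisfies $-1 < \lambda < 1$, which is exactly where the hypotheses ``connected'' (via $\mu_{n-1}>0$ from Lemma \ref{Lemma31}) and ``not regular bipartite'' (via the sharp bound $\mu_1 \le 2\mathfrak{d}$ from Lemma \ref{Lemma42}) are consumed so that Lemma \ref{Lemma41} guarantees convergence of the Neumann series — has already been carried out in the proof of Theorem \ref{Theo41}. The present proposition only repackages that limit as an explicit iteration beginning from the normalized score vector, so the sole thing requiring care is the trivial identification of $\mathbf{q}^{(k)}$ with the partial sums.
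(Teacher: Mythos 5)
Your proof is correct and matches the paper's, which simply states that the proposition is an immediate consequence of Theorem \ref{Theo41}; you have merely spelled out the identification of $\mathbf{q}^{(k)}$ with the partial sums of the Neumann series, which is exactly the intended argument.
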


\begin{proof}
It is the immediate consequence of Theorem \ref{Theo41}. 
\end{proof}

The iteration process has an interpretation on graphs. In the following description, the multiplier $1 / \mathfrak{d}$ in the decomposition of $\mathbf{q}$ is disregarded for the sake of simplicity. Let $G'$ be a graph identical to the comparison multigraph except that $\mathfrak{d} - d_i$ loops are assigned for object $X_i$. With this modification, balancedness is achieved with the minimal number of loops, at least one node (with the maximal degree) has no loops.
Graph $G'$ is said to be the \emph{balanced comparison multigraph}. It is the same procedure as balancing the weighted graph $G$ by loops in \citet[p. 1495]{Chebotarev2012}, where $G'$ is called the \emph{balanced-graph} of $G$. Note that $G'$ is connected or regular bipartite if and only if $G$ is connected or regular bipartite, respectively.

Initially, all objects (nodes) are endowed with an own estimation of performance $\mathbf{s}$, corresponding to the row sum vector. In the first step, the performance of objects compared with the given one is taken into account through the edges. $P \mathbf{s}$ means the average scores of the objects that were compared with it (weighted by the number of comparisons, that is, the sum of edges between the two objects). The introduction of $\mathfrak{d} - d_i$ loops on $X_i$ provides that the number of objects reachable on 1-long paths is exactly $\mathfrak{d}$. Now strength of objects compared with the given one is added to the original scores to get $\mathbf{s} + P \mathbf{s}$.

In the $k$th step, the average scores of objects available on all $k$-long paths $P^k \mathbf{s}$ is added to the previous rating vector. If $G$ is a connected, and not regular bipartite graph, then this iteration converges to the least squares ranking due to Theorem \ref{Theo41}.
Example \ref{Examp44} illustrates the decomposition of the least squares rating for the ranking problem analyzed in Example \ref{Examp31}.

\begin{figure}
\centering
\caption{The balanced comparison multigraph of Example \ref{Examp44}}
\label{Fig5}
\vspace{0.5cm}
\begin{tikzpicture}[scale=1,auto=center, transform shape, >=triangle 45, every loop/.style={}]
  \node (n1) at (1,10) {$s_1 = 1$};
  \node (n3) at (1,8)  {$s_3 = 0$};
  \node (n5) at (1,6)  {$s_5 = 1$};
  \node (n2) at (7,10) {$s_2 = 1$};
  \node (n4) at (7,8)  {$s_4 = 0$};
  \node (n6) at (7,6)  {$s_6 = -3$};
  \node (n7) at (4,8)  {$s_7 = 0$};

  \foreach \from/\to in {n1/n3,n3/n5,n5/n6,n5/n7,n2/n4,n4/n6,n7/n6}
    \draw (\from) -- (\to);
    
\path (n1) edge[loop] node {} (n1)
	  (n1) edge[loop above] node {} (n1)
	  (n2) edge[loop] node {} (n2)
	  (n2) edge[loop above] node {} (n2)
	  (n3) edge[loop] node {} (n3)
	  (n4) edge[loop] node {} (n4)
	  (n7) edge[loop] node {} (n7);
\end{tikzpicture}

\end{figure}
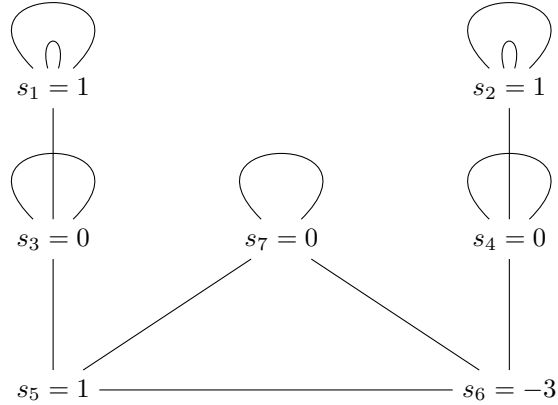
\begin{example} \label{Examp44}
See the preference graph on Figure \ref{Fig1} and its balanced comparison multigraph on Figure \ref{Fig5}. It is an undirected graph, the number of loops are determined by the differences $\mathfrak{d} - d_i$, $\left[ 2,2,1,1,0,0,1 \right]$. Nodes are labelled by the score of the corresponding object. At the start ($\mathfrak{d} \mathbf{q}^{(0)}$), every node gets $s_i$. In the $1$st step ($\mathfrak{d} \mathbf{q}^{(1)}$), the scores of nodes reachable on a 1-long path are added with a multiplier $1 / \mathfrak{d} = 1/3$. For example, in the case of $X_1$ it is $(2 s_1 + s_3)/ \mathfrak{d} = 2/3$.

In the $k$th iteration, the scores of nodes reachable on a $k$-long path are added with a multiplier $(1 / \mathfrak{d})^k$, where the number of scores taken into account is $\mathfrak{d}^k$, analogously. It also implies that the denominator in the fraction of the actual rating $\mathbf{q}^{(k)}_i$ is a divisor of $\mathfrak{d}^{k+1}$ for all $i = 1,2, \dots ,n$. 
Theorem \ref{Theo41} ensures that this process converges if the comparison multigraph $G$ is not a regular bipartite graph. 

The rating vectors obtained in the successive steps of the iteration process are as follows
\[
\setlength{\arraycolsep}{6pt}
\begin{array}{ccrrrrrr}
    \text{Iterated ratings} & \multicolumn{1}{c}{\mathbf{q}^{(0)}}   & \multicolumn{1}{c}{\mathbf{q}^{(1)}}  & \multicolumn{1}{c}{\mathbf{q}^{(2)}}   & \multicolumn{1}{c}{\mathbf{q}^{(3)}}   & \multicolumn{1}{c}{\mathbf{q}^{(10)}}   & \multicolumn{1}{c}{\mathbf{q}^{(50)}}     & \multicolumn{1}{c}{\mathbf{q}} \\
    \midrule
    X_1  & 1/3  & 5/9	& 21/27    & 76/81    & 1.5075  & 1.8057  & 1.8095 \\
    X_2  & 1/3  & 5/9   & 17/27    & 56/81    & 0.6915  & 0.4800  & 0.4762 \\
    X_3  & 0    & 2/9   & 7/27     & 29/81    & 0.6178  & 0.8069  & 0.8095 \\
    X_4  & 0    & -2/9  & -5/27    & -19/81   & -0.3535 & -0.5211 & -0.5238 \\
    X_5  & 1/3	& \multicolumn{1}{c}{0}    & 1/27     & -7/81    & -0.2092 & -0.1912 & -0.1905 \\
    X_6  & -1   & -8/9  & -31/27   & -95/81   & -1.4450 & -1.5231 & -1.5238 \\
    X_7  & 0    & -2/9  & -10/27   & -40/81   & -0.8090 & -0.8571 & -0.8571 \\
\end{array}
\]
It immediately shows the role of comparisons. For example, the scores of $X_1$, $X_2$ and $X_5$ are equal, however, their position in the preference graph is significantly different, which can be seen in the subsequent steps of the iteration. The final ranking emerges only after the 13th step.
\end{example}

Example \ref{Examp44} suggests two observations. The first is that ties are usually eliminated after taking the comparison structure into account, which can be advantageous in practical applications by reducing the demand for tie-breaking rules. The second is the possibly slow convergence: in order to get the final ranking of the objects, long paths also may be necessary to consider causing some difficulties in the interpretation since it is not exactly clear why they still have some importance. Nevertheless, the graph of Example \ref{Examp31} has few edges relative to a round-robin ranking problem, thus it is not surprising that many iteration steps are required.

Theorem \ref{Theo41} has virtually no significance from a computational viewpoint since the least squares problem can be solved with a modest cost of $O(n^3)$ flops \citep{JiangLimYaoYe2011}.

Iterative scoring procedures used for ranking the nodes in a digraph can be traced back to the works of \citet{Wei1952} and \citet{Kendall1955}, called the long-path method by \cite{Laslier1997}. It is based on the right eigenvector corresponding to the largest positive eigenvalue of the adjacency matrix. \citet{MoonPullman1970} shows that the iterative procedure converges to a non-zero vector if the digraph is strongly connected, namely, there exists a path from $X_i$ to $X_j$ if $X_i \neq X_j$. According to \citet{Chebotarev1994} and \citet{HeringsvanderLaanTalman2005}, the severely restricted domain limits the usefulness of this concept.

This drawback is eliminated by the positional power measure \citep{HeringsvanderLaanTalman2005}. Its rating vector $\mathbf{p}$ is the limit point of the sequence
\[
\mathbf{p}^0 = \mathbf{0},
\]
\[
\mathbf{p}^k = T^A \mathbf{e} + \frac{1}{n} T^A \mathbf{p}^{k-1}, \quad k = 1,2, \dots .
\]
Now the first step ($T^A \mathbf{e}$) gives the 'score' of nodes in a digraph, the number of their successors. Subsequently, each node gets a fraction $1/n$ of the previous power of its successors and a fixed amount of 1. \citet{HeringsvanderLaanTalman2005} do not mention the use of the Neumann series explicitly. However, the decomposition in the proof of \citet[Lemma 4.2]{HeringsvanderLaanTalman2005} is based on the equation $\left[ I - (1/n)T^A \right]^{-1} = I + \sum_{k=1}^\infty (1/n)^k \left( T^A \right)^k$ as $\lim_{k \to \infty} (1/n)^k \left( T^A \right)^k = \mathbf{0}$.


Besides these common roots, we have identified three differences between the least squares rating and positional power of nodes in digraphs. The first is in the approach of the two ratings. According to the concepts of \citet{ChebotarevShamis1999}, positional power (as well as the Wei-Kendall method) is a kind of \emph{win-loss combining} procedure distinguishing the wins and the losses of objects, while least squares is a \emph{win-loss unifying} procedure, treating all results uniformly. Here the outcomes of paired comparisons only appear in the results matrix $A$, therefore they influence the ranking through $\mathbf{s}$.

The second difference is the role of iterated ratings: for the positional power they are allocated to the predecessors of nodes, whereas for the least squares, they remain on the objects. On the other side, positional power adds the original score $T^A \mathbf{e}$ to the nodes in each step, while the least squares procedure uses the fixed scores vector $\mathbf{s}$ for the adjustment of previous ratings. 

The third, maybe the most interesting difference is the choice of parameter $1/a$ reflecting the importance of  successors in the digraph. \citet{HeringsvanderLaanTalman2005} define it somewhat arbitrarily as $1/a = 1/n$, the reciprocal of the number of nodes in the digraph, however, the procedure works for any nonnegative numbers less than $1/(n-1)$. It would be interesting to see how this parameter $1/a$ can be increased. The proof of \citet[Theorem 3.1]{HeringsvanderLaanTalman2005} certainly works if all nodes have less than $a$ successors. In such a way the exact definition of $a$ becomes endogenous, similar that of the least squares method, and the procedure will depend on the structure of the digraph.

It was shown in Theorem \ref{Theo41} that for the least squares method the \emph{decay parameter} $1 / \mathfrak{d}$ is determined endogenously by $\mathfrak{d} = \max \{ d_i: i = 1,2, \dots ,n \}$, the maximal number of comparisons of any objects and the iteration process works for all ranking problems $(N,A,M)$ with a connected, and not regular bipartite comparison multigraph. The proposal of \citet{HeringsvanderLaanTalman2005} can also be applied in Theorem \ref{Theo41}, as the convergence clearly holds if the parameter $1/c$ in the iteration is smaller than $1 / \mathfrak{d}$, which is provided if $c > m(n-1)$. For instance, $c = mn$ is a value analogous to the idea of \citet{HeringsvanderLaanTalman2005}, but it obviously differs from the least squares method.

\begin{remark}
In Theorem \ref{Theo41}, the decay parameter $1/ \mathfrak{d}$ is determined endogenously by the decomposition of $L = \mathfrak{d} I - C$. If it becomes larger, the convergence of the Neumann series is not ensured, there will be more critical cases than regular bipartite graphs. If it is smaller, the iteration converges, however, loops will always appear in the balanced comparison multigraph $G'$ and the interpretation becomes more complicated.
\end{remark}

The decomposition of the least squares rating works perfectly for regular graphs without loops. They are characteristic for some applications, like Swiss-system tournaments, and it is unlikely that such a set of comparisons results in a bipartite comparison multigraph.

Other graph interpretations of the least squares method may be possible on the basis of the system of linear equations $\mathbf{q} = L^{+} \mathbf{s}$. For example, a topological interpretation was given for $L^{+}$ in \citet[Theorem 3]{ChebotarevShamis1998b}.

Finally, it is worth to compare the graph interpretation above with the one given for the generalized row sum method by \cite{Shamis1994}. The latter calculates the number of $k$-long routes with an even and odd number of drains (i.e. sequence of edges with some possible loops) between the objects, where $\varepsilon$ represents the importance attributed to indirect connections, that is, the $k$-long routes have a weight of $\varepsilon^k$. It works for all $\varepsilon < 1 / \left[ 2 m (n-1) \right]$.
We think that from a graph-theoretic viewpoint the above interpretation is more simple, however, the appearance of loops remains a weakness.

\section{Concluding remarks} \label{Sec5}

We have shown that the least squares ranking method has a graph interpretation with the exception of some special cases, when the comparison multigraph is a regular bipartite graph. The rating vector can be obtained as the limit of an iteration process based on scores and a decay parameter $1 / \mathfrak{d}$, where $\mathfrak{d} = \max \{ d_i: i = 1,2, \dots ,n \}$ is the maximal number of comparisons, determined endogenously by the matches matrix $M$. 

Aggregation of the results $A = \sum_{p=1}^{m} A^{(p)}$ eliminates a lot of information regarding the outcomes of paired comparisons, for instance, $a_{ij} = a_{ij}^{(1)} + a_{ij}^{(2)}$ can be equal to $0$ by adding both $1$ and $-1$ or $0$ and $0$. We do not know of any ranking methods which, besides the aggregated expected value $a_{ij}$, account for the variance of $a_{ij}^{(p)}$, i.e. the bias from the fact that usually $a_{ij}^{(p)}$ is not equal to the average $a_{ij} / m_{ij}$. However, the difference $a_{ij}^{(p)} - a_{ij} / m_{ij}$ can carry some information about the comparison of $X_i$ and $X_j$: intuitively, their relative ranking seems to be more stable if $a_{ij}^{(p)} \approx a_{ij} / m_{ij}$ for all $p$ where $r_{ij}^{(p)}$ is known. A possible way of addressing this reliability of the paired comparisons is through an adjustment of the number of comparisons between $X_i$ and $X_j$.

Since digraphs can be incorporated in our setting, it is interesting to connect the decomposition of the least squares method with the positional power of nodes in weighted digraphs \citep{HeringsvanderLaanTalman2005}. A \emph{weighted digraph} is defined by the set of nodes and a nonnegative matrix $W \in \mathbb{R}_+^{n \times n}$, where $w_{ij} > 0$ denotes the weight of edge from $X_i$ to $X_j$.
It is natural to choose $a_{ij} = w_{ij} / (w_{ij} + w_{ji})$ and $m_{ij} = w_{ij} + w_{ji}$ but in weighted digraphs edges from a node to itself ($w_{ii} > 0$) are also allowed. It remains an open question what is the relation of the two concepts.

Finally, it follows from Theorem \ref{Theo41} that convergence is ensured for all multipliers less than $1/ \mathfrak{d}$, offering a natural way for the generalization of the least squares method. If $G$ is not a bipartite graph, the parameter can also be increased. Another promising direction may be the change of exponential decay.

\section*{Acknowledgements}
We are grateful to Pavel Yurievich Chebotarev and Julio Gonz\'alez-Díaz for their help in many ways. We also thank S\'andor Boz\'oki, Ferenc Forg\'o, J\'anos F\"ul\"op, J\'ozsef Temesi and two anonymous referees for their comments on earlier versions of the manuscript.

The research was supported by OTKA grant K-77420.

This research was supported by the European Union and the State of Hungary, co-financed by the European Social Fund in the framework of TÁMOP 4.2.4. A/1-11-1-2012-0001 'National Excellence Program'.


\end{document}